\crefname{equation}{}{}
\Crefname{equation}{}{}
\newtheoremstyle{mythmstyle}
  {8 pt} 
  {3 pt} 
  {} 
  {} 
  {\bfseries} 
  {.} 
  {.5em} 
  {} 
\theoremstyle{mythmstyle}
\newtheorem{theorem}{Theorem}
\newtheorem{lemma}[theorem]{Lemma}
\newtheorem*{definition*}{Definition}
\newtheorem*{remark*}{Remark}
\crefname{definition}{\textbf{definition}}{definitions}
\Crefname{definition}{Definition}{Definitions}
\crefname{assumption}{\textbf{assumption}}{assumptions}
\Crefname{assumption}{Assumption}{Assumptions}
\DeclarePairedDelimiter\norm{\lVert}{\rVert}
\newcommand{\opnorm}{\@ifstar\@opnorms\@opnorm}
\newcommand{\@opnorms}[1]{%
  \left|\mkern-1.5mu\left|\mkern-1.5mu\left|
   #1
  \right|\mkern-1.5mu\right|\mkern-1.5mu\right|
}
\newcommand{\@opnorm}[2][]{%
  \mathopen{#1|\mkern-1.5mu#1|\mkern-1.5mu#1|}
  #2
  \mathclose{#1|\mkern-1.5mu#1|\mkern-1.5mu#1|}
}
\DeclareMathOperator{\sign}{sign}
\def\Real{\mathbb{R}}
\DeclareMathOperator{\cov}{Cov}
\def\Proj{\mathcal{P}}
\newcommand{\argmax}{\operatornamewithlimits{arg\,max}}
\newcommand{\ConvexIndicator}[1]{\mathbb{I}_{#1}}
\DeclareMathOperator{\trace}{trace}
\DeclarePairedDelimiterX\innerp[2]{\langle}{\rangle}{#1,#2}
\newcommand{\Fantope}[2][]{\mathcal{F}_{#1}^{#2}}
\newcommand{\D}[2][]{\mathcal{D}_{#2}^{#1}}
\newcommand{\inner}[2]{\langle #1,#2\rangle}
\newcommand{\soft}[1]{\mathcal{S}_{#1}}
\newcommand{\InputMatrix}{S}
\newcommand{\PopMatrix}{\Sigma}
\newcommand{\PopProjection}{\Pi}
\newcommand{\ndim}{k}
\newcommand{\PrimalVariable}{H}
\newcommand{\DualVariable}{Z}
\newcommand{\PopOperator}{\Gamma}
\newcommand{\EstProjection}{\widehat{\PopProjection}}
\newcommand{\InputError}{W}
\newcommand{\SparsePenalty}{{\rho_2}}
\newcommand{\SmoothPenalty}{{\rho_1}}
\newcommand{\DiffOperator}{D}
\newcommand{\DiffOperatorHalf}{\Delta}
\newcommand{\Gap}{\delta}
\newcommand{\step}{\tau}
\begin{document}

\title{Localized Functional Principal Component Analysis}

\author[1]{Kehui Chen}
 \author[2]{Jing Lei}
\affil[1]{Department of Statistics and Department of Psychiatry, University of Pittsburgh}
\affil[2]{Department of Statistics, Carnegie Mellon University}

\maketitle

\begin{abstract}
We propose localized functional principal component analysis (LFPCA), looking for orthogonal basis functions with localized support regions that explain most of the variability of a random process. The LFPCA is formulated as a convex optimization problem through a novel Deflated Fantope Localization method and is implemented through an efficient algorithm to obtain the global optimum. We prove that the proposed LFPCA converges to the original FPCA when the tuning parameters are chosen appropriately. Simulation shows that the proposed LFPCA with tuning parameters chosen by cross validation can almost perfectly recover the true eigenfunctions and significantly improve the estimation accuracy when the eigenfunctions are truly supported on some subdomains. In the scenario that the original eigenfunctions are not localized, the proposed LFPCA also serves as a nice tool in finding orthogonal basis functions that balance between interpretability and the capability of explaining variability of the data. The analyses of a country mortality data and a growth curve data reveal interesting features that cannot be found by standard FPCA methods.
\end{abstract}

\noindent Keywords: functional principal component analysis, domain selection, interpretability, orthogonality, deflation, convex optimization.

\section{Introduction}\label{sec:introduction}
Functional principal component analysis has emerged as a major tool to explore the source of variability in a sample of random
curves and has found wide applications in functional regression, curve classification, and clustering \citep{castro1986principal,rice1991estimating,cardot2000nonparametric,yao2005functional,ramsay2005functional,hall2006properties}. In this paper, we consider functional principal component analysis with localized support regions.
That is, for a smooth random function $X$, we look for orthogonal basis functions with localized support regions that explain most of the variance. The main motivation of the localized functional principal component analysis (LFPCA) is to find a parsimonious linear representation of the data that balances the interpretability and the capability of explaining variance of the stochastic process. 

The proposed method outputs localized basis functions whose localization level is controlled by a localization tuning parameter. We propose two methods to select the localization parameter, corresponding to two useful applications of our proposed method. First, one can choose the localization parameter by maximizing the explained variance of the random process computed by $V$-fold cross-validation. Our simulation shows that when the eigenfunctions truly have localized support regions, the proposed LFPCA with a localization parameter chosen by cross-validation significantly improves the estimation accuracy of the eigenfunctions compared to standard FPCA methods. On the other hand, when the original eigenfunctions are not localized, the localization parameter chosen by cross-validation is expected to be very close to zero and it is confirmed by our numerical studies that the performance of the proposed LFPCA is almost identical to standard FPCA methods. 
The second method of choosing the localization parameter is to seek the most localized basis functions that explain a fixed level of variance. This method is particularly useful when the standard eigenfunctions are not localized, and it makes sense for the proposed LFPCA not to target at the standard eigenfunctions but to balance between interpretability and the capability of explaining variance of the stochastic process. Details can be found in Section 3.2. We consider two data applications to illustrate the second method of choosing the localization parameter. One is a country mortality data, where the mortality rates at age 60 were recorded from year 1960--2006 for 27 countries around the world. The first three localized basis functions, explaining more than 85\% variance, correspond to variational modes around mid 1990s, 1980s, and 1960s, respectively. Another example is a growth curve data, where the height of 54 girls were densely recorded from year 1 to year 18. The first two localized basis functions explain more than 85\% variance, and clearly indicate that the main variational modes of female height growth are around age 12 and around age 5, perfectly matching the knowledge of growth spurts. These interesting features cannot be revealed by standard FPCA.

Domain localization has been studied by several authors in the functional regression model: $E(Y|X) = a + \int_{\mathcal{T}} X(t)\beta(t) dt$, where the coefficient function $\beta(t)$ is desired to be zero outside a subdomain $\mathcal{T}_0\in \mathcal{T}$ with the purpose of improved interpretability \citep{james2009functional,zhao2012wavelet,zhou2013functional}. Most of these methods turn the problem into a variable selection problem and use LASSO type penalties. In a recent thesis work \cite{lin2013some}, interpretable functional principal component analysis is studied, which has a very similar flavor as our proposed LFPCA. In their work, an $\ell_0$ penalty is added on eigenfunctions and a greedy algorithm based on basis expansion of curves is developed to solve the non-convex optimization problem. The formulation of localization or domain selection in the context of functional principal component analysis is quite challenging for at least two reasons. First, the eigen problem together with a localization penalty is usually not convex, and in general it is an NP-hard problem to find a global optimum.
Second, in order to obtain a sequence of mutually orthogonal eigen-components, a commonly taken procedure is to deflate the empirical covariance operator at step $j$ by removing the effect of the previous $j-1$ components \citep{white1958computation,Mackey08}. But with the localization penalty in the objective function, this procedure can not guarantee the orthogonality of such sequentially obtained eigen-components. In sequential estimation of principal components, being orthogonal to the first component is a natural requirement when looking for the second component, otherwise the maximization over second direction is not well-defined since the solution would still be the first direction. From a dimension reduction perspective, the orthogonality is also appealing since the resulting $k$ dimensional orthogonal basis leads to very simple calculation for subsequent inferences. 

The main contribution of this paper is three-fold. First, we formulate the LFPCA as a convex optimization problem with explicit constraint on the orthogonality of eigen-components. Second, we provide an efficient algorithm to obtain the global maximum of this convex problem. Third, we carefully investigate the estimation error from the discretized data version to the functional continuous version, as well as the complex interaction between the eigen problem and the localization penalty, and prove consistency of the estimated eigenfunctions. The starting point of our method is a sup-norm consistent estimator of the covariance operator, up to a constant shift on the diagonal. For dense and equally spaced observations with or without measurement error, the proposed method can be directly carried out on the sample covariance, i.e., without the need to perform basis expansion, smoothing of the individual curves, or smoothing of the estimated covariance operator. For other designs of functional data, the proposed method is still applicable when an appropriate covariance estimator is available. 

Our formulation of LFPCA borrows ideas from recent developments in sparse principal component analysis. In \cite{vu2013fantope}; \cite{lei2014sparsistency}, a similar convex framework based on Fantope Projection and Selection has been proposed to estimate a $k$ dimensional sparse principal subspace of a high dimensional random vector (see also \cite{AEJL:2007} for $k=1$).  These sparse subspace methods are useful when the union of the support regions of several leading eigenvectors is sparse. In sparse PCA settings, the notion of sparsity requires the proportion of non-zero entries in the leading eigenvectors to vanish as the dimensionality increases and therefore it makes sense to consider the union of the support regions of several leading eigenvectors to be sparse. However, in functional data settings, the length ratio of a support subdomain over the entire domain is determined by
the random curve model and usually a constant, and the union of several leading subdomains can be as large as the entire domain.
This is also the reason that we use the notion ``localized'' instead of ``sparse''. It has remained challenging to obtain sparse eigenvectors sequentially that each one is allowed to have a different support region. A particular challenge is the interaction between orthogonality and the sparse penalty. 
Besides the difference between functional PCA and sparse PCA, one main extension developed in our method is the construction of a deflated Fantope to estimate individual eigen-components sequentially, with possibly different support regions and guaranteed orthogonality. This deflated Fantope formulation is of independent interest in many other structured principal component analysis.

The rest of this paper is organized as follows. In section 2, we introduce the formulation of localized functional principal component analysis. Section 3 derives the solution to the optimization problem and describes the algorithm as well as the selection of tuning parameters. Section 4 contains the consistency results. Section 5 and Section 6 present numerical experiments and data examples to illustrate our method. Section 7 contains some discussions and extensions.  Technical details are included in the Appendix.

\section{LFPCA Through Deflated Fantope Localization}\label{sec:formulation}
We consider a square integrable random process $X(t):\mathcal T\mapsto \Real$ over a 
compact interval $\mathcal{T}\subset \Real$, with mean and covariance functions
$\mu(t) = E X(t) $ and $\PopOperator(s,t)= \cov(X(s), X(t))$, and covariance operator
$(\PopOperator f)(t)  = \int_{s\in \mathcal{T}}f(s) \PopOperator(t,s)ds.$
Under the minimal assumption that $\PopOperator(s,t)$ is continuous over  $(s,t)$, this operator $\PopOperator$ has orthonormal eigenfunctions $\phi_j(t), j=1,2,\cdots,$ with nonincreasing eigenvalues $\lambda_j$, satisfying
$\PopOperator\phi_j = \lambda_j \phi_j.$
The well known Karhunen-Lo\`{e}ve expansion then gives the representation
\begin{equation}
\label{KLexpansion}
X(t)=\mu(t)+\sum_{j=1}^{\infty}\xi_{j}\phi_j(t),
\end{equation}
where $\xi_j, \ j\geq 1,$ is a sequence of uncorrelated random variables satisfying $E(\xi_j)=0$
and ${\rm var}(\xi_j)=\lambda_j$, with an explicit representation
$\xi_j= \int_{t\in \mathcal{T}}(X(t)-\mu(t))\phi_j(t)dt.$ A key inference task in functional principal component analysis (FPCA) is to estimate the leading
eigenfunctions $\phi_j(t),~1\leq j \leq k,$ from $n$ independent sample curves.

In practice, the underlying sample curves $X_i(t),~1\le i\le n,$ are usually recorded at a grid of points and may
be contaminated with additive measurement errors. We start with dense and equally spaced observations
\begin{equation}\label{eq:obs-model}
Y_{il} = X_i(t_l) + \epsilon_{il} = \mu(t_l)+\sum_{j=1}^{\infty}\xi_{ij}\phi_j(t_l)+\epsilon_{il}\,,~~i = 1\,,\dots\,, n\,,~~ l= 1\,,\dots\,, p\,,
\end{equation} where $\epsilon_{il}$ are independent noises
with mean zero and variance $\sigma^2$, and
$t_l,~1\le l\le p,$ are grid points in $\mathcal T$ at which observations are recorded. Starting from such discrete and possibly noisy data, there are different ways of introducing smoothness in the estimation of FPCA. \cite{rice1991estimating}; \cite{silverman1996smoothed}; \cite{huang2008functional}, among others, studied approaches where a smoothness penalty on eigenfunctions is integrated in the optimization step of eigen-decomposition.
Let $\InputMatrix$ be the $p\times p$ sample covariance matrix of the observed vector $Y$, and $v$ be a $p$ dimensional vector. \cite{rice1991estimating} used a roughening matrix $\DiffOperator=\DiffOperatorHalf^T \DiffOperatorHalf$ where $\DiffOperatorHalf\in
\Real^{(p-2)\times p}$ is a second-differencing operator:
$$
\DiffOperatorHalf_{ij} =\left\{ \begin{array}{ll}
  1, &~\text{if}~j\in\{i,i+2\}\,,\\
  -2, &~\text{if}~j=i+1\,,\\
  0,&~\text{otherwise}\,.
\end{array}\right.
$$
A smoothed eigenfunction estimator is obtained by solving the following
eigen problem:
\begin{equation}\max v^T (\InputMatrix- \SmoothPenalty D)v,  {\rm\ \ s.t.\ } 
  \norm{v}_2 = 1,\end{equation} where $\SmoothPenalty$ is a smoothing parameter, and 
  $\norm{v}_2$ is the Euclidean norm of $v$.

A straightforward approach to localize the estimated eigenfunctions is to add another localization penalty:
\begin{equation}\label{eq:pen-pca-nonconv}
  \max v^T (\InputMatrix- \SmoothPenalty D)v - \SparsePenalty \norm{v}_1, {\rm\ \ s.t.\ } \norm{v}_2 = 1,\end{equation}
where $\SparsePenalty$ is a tuning parameter, and $\norm{v}_1$ is the $\ell_1$ norm of $v$.
However, this is not a convex problem and there are
no known algorithms that can efficiently find a global optimum even for the first eigen-component.

Here we propose a novel sequential estimation procedure based on the idea
of estimating the rank one projection matrix $vv^T$. Let 
$\innerp{A}{B}=\trace(A^TB)$ for matrices $A$, $B$ of compatible dimensions. Denote 
$\norm{H}_{1,1}$ the matrix $\ell_1$ norm, which is the sum absolute value of all 
entries in $H$. Starting from $\EstProjection_0 = 0$, for each $j=1,...,k$, the 
$j$th localized eigen-component is estimated as follows.
 \begin{equation}\label{opt2}\begin{split} & H_j = \argmax 
   \innerp{\InputMatrix-\SmoothPenalty D}{\PrimalVariable} - \SparsePenalty 
   \norm{\PrimalVariable}_{1,1}, {\rm \ s.t. \ } \PrimalVariable \in 
   \D{\EstProjection_{j-1}},\\
 & \hat v_j = {\rm\ the\ first\ eigenvector\ of\ } H_j,\\
  & \EstProjection_j =  \EstProjection_{j-1} + \hat v_j \hat 
  v_j^T,\end{split}\end{equation} where,
  for any $p\times p$ projection matrix $\Pi$, $$\D{\Pi} \coloneqq 
  \{\PrimalVariable: 0 \preceq \PrimalVariable \preceq I, ~\trace(\PrimalVariable) = 
  1, \ {\rm and}\ \innerp{H}{\Pi} =0 \},$$
  and, for symmetric matrices $A$, $B$, ``$A\preceq B$'' means that $B-A$ is positive semidefinite.

  Problem (\ref{opt2}) is a convex relaxation of (\ref{eq:pen-pca-nonconv}) with integrated orthogonality constraints on the estimated localized eigen-components.  With the estimated $\hat v_j$, we can easily obtain an
estimate $\hat\phi_j(t)$ of the localized eigenfunction $\phi_j(t)$ by standard interpolation techniques such as linear interpolation, plus an optional final step of re-orthogonalization and re-normalization.

With appropriately chosen tuning parameters, the performance of the proposed method ties to the maximum entry-wise error of the discretized covariance estimator $S$ (see
\Cref{sec:theory} for details).
Our presentation will focus on a sample covariance $S$ that is computed from dense and equally spaced observations, but the proposed method is not restricted to a dense regular design as long as a reasonable covariance estimate can be obtained. More discussions can be found in Section 7.

To solve for $\hat v_j$ and $\hat \phi_j(t)$, the key step in (\ref{opt2}) is to solve
\begin{equation} \max_H \innerp{\InputMatrix-\SmoothPenalty D}{\PrimalVariable} - \SparsePenalty \norm{\PrimalVariable}_{1,1}, \ \text{s.t.} \ \PrimalVariable \in \D{\Pi}\label{eq:H_i},\end{equation} where $\Pi = \hat\Pi_{j-1}$ at step $j$.  In next
  section we present an algorithm that solves problem \eqref{eq:H_i}, with a discussion
  on the choice of tuning parameters $\SmoothPenalty$ and $\SparsePenalty$.

In the sparse PCA literature, \cite{AEJL:2007}; \cite{vu2013fantope} have considered the following problem,
\begin{align}
  \max_H \innerp{\InputMatrix}{H}-\rho \norm{H}_{1,1}, {\rm \ s.t. \ }  H\in \Fantope{d}\,,\label{eq:pen-pca-conv}
\end{align}
where the convex set $\Fantope{d} \coloneqq \{\PrimalVariable: 0 \preceq \PrimalVariable \preceq I,~ \trace(\PrimalVariable) = d\}$ is called the  \emph{Fantope} of degree $d$ \citep{Dattorro}, and
 is the convex hull of all rank $d$ projection matrices. 
While the convex relaxation given in \eqref{eq:pen-pca-conv} allows us
to estimate $d$-dimensional sparse principal subspaces, it does not lead to
a sequence of mutually orthogonal eigenvectors with different support regions. 

To ensure orthogonality among the estimated eigenvectors, we consider $\D{\Pi} \coloneqq \{\PrimalVariable: \PrimalVariable \in \Fantope{1}, \ {\rm and}\ \innerp{H}{\Pi} =0 \}$, which we call the \emph{deflated Fantope}. It can be naturally generalized to $\mathcal{D}_{\Pi}^{d} \coloneqq \{\PrimalVariable: \PrimalVariable \in \Fantope{d}, \ {\rm and}\ \innerp{H}{\Pi} =0 \}$ to estimate mutually orthogonal principal subspaces.  Such a
feasibility deflation technique is quite different from the commonly suggested
matrix deflation techniques in sequential estimation of eigenvectors (see \cite{Mackey08} for example).

\section{Algorithm}\label{sec:algorithm}
\subsection{Deflated Fantope Localization using ADMM}
The main difficulty in solving problem \eqref{eq:H_i} is the complex interaction between
the $\ell_1$ penalty and the deflated Fantope constraint.  To overcome this difficulty, we
write \eqref{eq:H_i} in an equivalent form to separate the $\ell_1$ penalty and deflated Fantope constraint:
\begin{equation}\label{eq:H_i-equiv}
\begin{aligned}  \min_{\PrimalVariable,\DualVariable} & 
~\ConvexIndicator{\D{\Pi}}(\PrimalVariable)-
\innerp{S-\SmoothPenalty\DiffOperator}{\PrimalVariable} + \SparsePenalty 
\norm{\DualVariable}_{1,1}\,,\\
  \text{s.t.} &~ \PrimalVariable - \DualVariable = 0 \,,
  \end{aligned}
\end{equation}
where $\ConvexIndicator{\D{\Pi}}$ is the convex indicator function, which
is $\infty$ outside $\D{\Pi}$ and 0 inside $\D{\Pi}$.
Problem \eqref{eq:H_i-equiv} is a convex global variable consensus optimization, which can be solved using alternating direction method of multipliers (ADMM, \cite{Boyd-ADMM}).  We describe in \Cref{alg:admm} an ADMM algorithm that solves \eqref{eq:H_i-equiv} and hence \eqref{eq:H_i}. It extends the FPS algorithm
in \cite{vu2013fantope} to the deflated Fantope.
\begin{algorithm}[tb]
\caption{Deflated Fantope Localization using ADMM}
\label{alg:admm}
\begin{algorithmic}
    \Require{$\InputMatrix = \InputMatrix^T$, $\Pi$, $\DiffOperator$, $\SmoothPenalty,\SparsePenalty \geq 0$, $\step > 0$, $\epsilon > 0$}
    \State $\DualVariable^{(0)} \gets 0, W^{(0)} \gets 0$
    \Comment{Initialization}
    \Repeat \quad $r = 1,2,\ldots$
    \State
        $\PrimalVariable^{(r)} \gets \Proj_{\D{\Pi}} \big[\DualVariable^{(r-1)} - 
        W^{(r-1)} + (\InputMatrix - \SmoothPenalty\DiffOperator)/\step \big]$
        \Comment{Deflated Fantope projection}
    \State
        $\DualVariable^{(r)} \gets \soft{\SparsePenalty/\step} \big( \PrimalVariable^{(r)} + W^{(r-1)} \big)$
        \Comment{Elementwise soft thresholding}
    \State
        $W^{(r)} \gets W^{(r-1)} + \PrimalVariable^{(r)} - \DualVariable^{(r)}$
        \Comment{Dual variable update}
    \Until{$\norm{\PrimalVariable^{(r)} - \DualVariable^{(r)}}_F^2 \lor \step^2 \norm{\DualVariable^{(r)} - \DualVariable^{(r-1)}}_F^2 \leq \epsilon^2$}
        \Comment{Stopping criterion}
    \\
    \Return{$\DualVariable^{(r)}$}
\end{algorithmic}
\end{algorithm}
\vskip 5pt

The two matrix operators used in the algorithm are defined as follows.

\noindent (i) Soft-thresholding operator: for any $a>0$,
$$
\soft{a}(x) = \sign(x)\max(|x|-a,0)\,.
$$

\noindent (ii) Deflated-Fantope-projection operator:
For any $p\times p$ symmetric matrix $A$ and projection matrix $\Pi$,
$$
\Proj_{\D{\Pi}}(A)\coloneqq \arg\min_{B\in\D{\Pi}}\norm{A-B}_F^2
$$
is the Frobenius norm projection of $A$ onto the deflated Fantope $\D{\Pi}$.

A non-trivial subroutine in \Cref{alg:admm} is to calculate the deflated-Fantope-projection $\Proj_{\D{\Pi}}(A)$
for a symmetric matrix $A$.
The following lemma gives a close-form characterization of the deflated-Fantope-projection operator.
\vskip 5pt
\begin{lemma}\label{lem:char-dfantope}
Let $\Pi=VV^T$, where $V$ is a $p\times d$ matrix with orthonormal columns.
Let $U$ be a $p\times (p-d)$ matrix that forms an
orthogonal complement basis of $V$.  Then
  $$
  \Proj_{\D{\Pi}}(A) =
  U\left[\sum_{i=1}^{p-d} \gamma_i^+(\theta) \eta_i \eta_i^T\right]U^T,
  $$
  where $(\gamma_i,\eta_i)_{i=1}^{p-d}$ are eigenvalue-eigenvector pairs of $U^T A U$: $
  U^T A U = \sum_{i=1}^{p-d} \gamma_i \eta_i \eta_i^T\,,
  $
  and $\gamma_i^+(\theta)=\min(\max(\gamma_i-\theta,0),1)$, with $\theta$ chosen such
  that $\sum_{i=1}^{p-d}\gamma_i^+(\theta)=1$.
\end{lemma}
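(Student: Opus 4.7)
The plan is to reduce the deflated-Fantope projection to a standard Fantope projection in the $(p-d)$-dimensional subspace orthogonal to the range of $V$, and then invoke the known closed-form Fantope projection from \cite{vu2013fantope}.

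The first step is to rewrite the orthogonality constraint $\innerp{H}{\Pi}=0$ using positive semidefiniteness. Since $\Pi = VV^T$ and $H\succeq 0$, one has
\[
\innerp{H}{\Pi}=\trace(V^T H V),
\]
and $V^T H V \succeq 0$. Hence $\innerp{H}{\Pi}=0$ forces $V^T H V = 0$, which in turn gives $H^{1/2}V = 0$ and therefore $HV = 0$. So any $H\in\D{\Pi}$ has its range inside the range of $U$, and can be written as $H = U B U^T$ for a unique symmetric $B\in\Real^{(p-d)\times(p-d)}$. A direct check, using $U^T U = I_{p-d}$ and $\trace(UBU^T) = \trace(B)$, shows that $H\in\D{\Pi}$ if and only if $B$ lies in the ordinary Fantope $\Fantope{1}$ of degree $1$ in dimension $p-d$.

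Next I would expand the objective. Write $P = UU^T$ and $Q = VV^T = \Pi$, so that $P+Q=I$. Decomposing $A = PAP + PAQ + QAP + QAQ$ and using that $UBU^T$ is supported on the range of $U$,
\[
\norm{A-UBU^T}_F^2 = \norm{PAP - UBU^T}_F^2 + \norm{PAQ}_F^2 + \norm{QAP}_F^2 + \norm{QAQ}_F^2.
\]
Only the first summand depends on $B$, and using $U^T U = I_{p-d}$ it equals $\norm{U^T A U - B}_F^2$. Thus computing $\Proj_{\D{\Pi}}(A)$ is equivalent to computing the Frobenius-norm projection of the symmetric matrix $M := U^T A U$ onto the standard Fantope $\Fantope{1}$ in $\Real^{(p-d)\times(p-d)}$.

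Finally, I would invoke the Fantope projection characterization established by \cite{vu2013fantope} (a variational/KKT argument using the eigendecomposition $M=\sum_i \gamma_i\eta_i\eta_i^T$, which reduces the problem to the simplex-type problem of choosing clipped eigenvalues $\gamma_i^+(\theta)\in[0,1]$ with $\sum_i\gamma_i^+(\theta)=1$). Substituting $B=\sum_{i=1}^{p-d}\gamma_i^+(\theta)\eta_i\eta_i^T$ back into $H = UBU^T$ gives the claimed formula. The main subtlety is the first step, namely verifying that the orthogonality-plus-PSD constraint collapses exactly to $HV = 0$ and thereby yields an isomorphism between $\D{\Pi}$ and the lower-dimensional Fantope; once that reduction is in hand, the remainder is a direct application of existing results and the Pythagorean expansion above.
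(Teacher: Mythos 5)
Your proposal is correct and follows essentially the same route as the paper's proof: reduce the deflated-Fantope projection to a standard Fantope projection of $U^TAU$ in the $(p-d)$-dimensional subspace orthogonal to the range of $V$, and then invoke the closed-form Fantope projection from \cite{vu2013fantope}. You spell out more carefully why any $H\in\D{\Pi}$ must satisfy $HV=0$ and hence be of the form $UBU^T$ (a step the paper treats as immediate), and you use a block-Pythagorean expansion where the paper expands the Frobenius norm directly, but these are cosmetic differences in an otherwise identical argument.
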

\vskip 5pt

The next theorem  ensures the convergence of our algorithm to a global optimum of problem \eqref{eq:H_i}. The proofs of \Cref{lem:char-dfantope} and \Cref{thm:convergence} are deferred to the Appendix.

\vspace{5 pt}

\begin{theorem}\label{thm:convergence}
  In \Cref{alg:admm}, $\DualVariable^{(r)}\rightarrow \PrimalVariable^*$, $\PrimalVariable^{(r)}\rightarrow \PrimalVariable^*$ as $r\rightarrow\infty$, where $\PrimalVariable^*$ is a global optimum of problem
  \eqref{eq:H_i}.
\end{theorem}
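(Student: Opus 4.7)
The plan is to recognize Algorithm 1 as scaled-form ADMM applied to the splitting \eqref{eq:H_i-equiv} and then invoke the standard convergence theorem in \cite{Boyd-ADMM}. Decompose the objective of \eqref{eq:H_i-equiv} as $f(\PrimalVariable) + g(\DualVariable)$ with
$f(\PrimalVariable) = \ConvexIndicator{\D{\Pi}}(\PrimalVariable) - \innerp{\InputMatrix-\SmoothPenalty\DiffOperator}{\PrimalVariable}$ and $g(\DualVariable) = \SparsePenalty \norm{\DualVariable}_{1,1}$, so that \eqref{eq:H_i-equiv} has the canonical ADMM form $\min f(\PrimalVariable) + g(\DualVariable)$ subject to the single linear equality $\PrimalVariable - \DualVariable = 0$.

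The first step is to verify the hypotheses for ADMM convergence. Both $f$ and $g$ are closed, proper, convex: $\D{\Pi}$ is the intersection of the compact convex Fantope $\Fantope{1}$ with the linear hyperplane $\{\innerp{\PrimalVariable}{\Pi}=0\}$, and is nonempty at every step $j \leq k$ because $\trace(\Pi)=j-1<p$, so any rank-one eigenprojection in the orthogonal complement of the range of $\Pi$ lies in $\D{\Pi}$; hence $\ConvexIndicator{\D{\Pi}}$ is closed proper convex, and subtracting a linear term preserves these properties, while $\norm{\cdot}_{1,1}$ is a norm. Since the coupling constraint is linear, the primal optimum is attained on the compact set $\D{\Pi}$, and the optimal value is finite, strong duality holds and a saddle point of the unaugmented Lagrangian exists.

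The second step is to confirm that the updates in Algorithm 1 are the exact block minimizers of the scaled augmented Lagrangian
\begin{equation*}
L_\step(\PrimalVariable, \DualVariable, W) = f(\PrimalVariable) + g(\DualVariable) + \tfrac{\step}{2}\norm{\PrimalVariable - \DualVariable + W}_F^2 - \tfrac{\step}{2}\norm{W}_F^2.
\end{equation*}
Completing the square in $\PrimalVariable$ reduces the $\PrimalVariable$-subproblem to $\arg\min_{\PrimalVariable \in \D{\Pi}} \norm{\PrimalVariable - A}_F^2$ with $A = \DualVariable^{(r-1)} - W^{(r-1)} + (\InputMatrix - \SmoothPenalty\DiffOperator)/\step$, which by definition is $\Proj_{\D{\Pi}}(A)$, matching the first line of the loop and admitting the closed form given by \Cref{lem:char-dfantope}. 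Completing the square in $\DualVariable$ gives an entrywise separable problem whose minimizer is the elementwise soft-threshold $\soft{\SparsePenalty/\step}(\PrimalVariable^{(r)} + W^{(r-1)})$, matching the second line. The third line is standard scaled dual ascent.

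Putting the pieces together, the ADMM convergence theorem of \cite{Boyd-ADMM} then yields $\PrimalVariable^{(r)} - \DualVariable^{(r)} \to 0$ together with convergence of both iterate sequences to a common limit $\PrimalVariable^*$ that is primal-optimal for \eqref{eq:H_i-equiv}; since the constraint $\PrimalVariable = \DualVariable$ forces \eqref{eq:H_i-equiv} and \eqref{eq:H_i} to have identical minimizers, $\PrimalVariable^*$ is a global optimum of \eqref{eq:H_i}. I expect the only substantive check to be the saddle-point hypothesis, which here reduces to verifying nonemptiness and compactness of $\D{\Pi}$ at every sequential step; all remaining work is bookkeeping to align the scaled-form ADMM recursion with Algorithm 1 line by line.
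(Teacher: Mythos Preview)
Your proposal is correct and follows essentially the same route as the paper: verify that $f$ and $g$ are closed, proper, and convex, confirm the existence of a saddle point via strong duality, identify the Algorithm~1 updates as the ADMM block minimizers, and then invoke the convergence result of \cite{Boyd-ADMM}. If anything, your treatment is slightly more careful than the paper's (e.g., explicitly checking nonemptiness of $\D{\Pi}$ and deriving the subproblem solutions by completing the square), but the underlying argument is identical.
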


In the proof of \Cref{thm:convergence} we will see that the auxiliary number $\step$ used in
\Cref{alg:admm} plays a role that is similar to the step size commonly seen in iterative
convex optimization solvers.  The particular choice of $\step$ does not affect the
theoretical convergence of
the ADMM algorithm. There are some general guidelines on the practical choice of $\step$ and
we refer to \cite{Boyd-ADMM} for further details.

\subsection{Choice of Tuning Parameters}
The optimization problem (\ref{opt2}) involves two tuning parameters:
$\rho_1$ controls the roughness of eigenfunctions; and $\rho_2$ controls the localization of the eigenfunctions.
We present a two-step approach where
$\SmoothPenalty$ is chosen first and kept the same for all $1\leq j\leq k$, and then $\SparsePenalty$ is determined sequentially for each eigenfunction $\phi_j(t)$ and denoted by $\rho_{2,j}$. The two parameters can be chosen together by straightforward modification but computationally it would be a bit intensive. 

The choice of the smoothing parameter $\rho_1$ has been discussed in \cite{rice1991estimating}, and they recommended cross-validation or manual selection. In our simulation and data analysis, we have used $V$-fold cross-validation. First the data is divided into $V$ folds, denoted by $\mathcal{P}_1$, $\mathcal{P}_2$, \dots, $\mathcal{P}_V$. Let $H^{(-v)}_j(\rho_1, \rho_2)$ be the estimated $H_j$ in (\ref{opt2}) using data other than $\mathcal{P}_v$ with tuning parameters $\rho_1$ and $\rho_2$. Let $S^{v}$ be the discrete covariance estimated from data $\mathcal{P}_v$. The smoothing parameter is chosen by maximizing the cross-validated inner product of $H_1^{(-v)}$ and $S^{v}$:
\begin{align}
\hat \rho_1 = &\argmax_{\rho\in\mathcal A_1} \sum_{v=1}^{V}\innerp{H^{(-v)}_1(\rho, 0)}{S^{v}}, \label{hatrho1}
\end{align}
where $\mathcal A_1$ is a candidate set of $\rho_1$ and empirically we found that a sequence between 0 and $ p {\rm \ times\ the\ largest\ eigenvalue\ of\ }\InputMatrix$ works well. 

In the following, we present two methods for the choice of $\rho_2$ given a pre-chosen smoothing parameter $\rho_1^*$. The first method is to choose $\rho_{2,j}$ by maximizing the cross-validated inner product of $H_j^{(-v)}$ and $S^{v}$:
\begin{align}
\hat \rho_{2,j} = &\argmax_{\rho\in\mathcal A_{2,j}} \sum_{v=1}^{V}\innerp{H_j^{(-v)}(\rho_1^*, \rho)}{S^{v}}\,,~j=1,2,...,\ndim\,,\label{hatrho2}
\end{align}
where $\mathcal A_{2,j}$ is a candidate set for $\rho_{2,j}$, and we propose to use a sequence between 0 and the 95\% quantile of absolute values of off-diagonal entries in $S_j$, with $S_j = (I-\hat\Pi_{j-1})S(I-\hat\Pi_{j-1})$. 


The $V$-fold cross-validation approach is expected to give a $\hat \rho_2$ that indicates the true localization level of the eigenfunctions.
The criterion in our proposed cross-validation corresponds to maximizing $\langle \Sigma, \hat H(\rho_2)\rangle$, where $\Sigma$ is the discretized population covariance and is substituted by the test sample covariance in practice.  When the true eigenvector $v$ is localized,
$\langle \Sigma, \hat H(\rho_2) \rangle$ shall be maximized at approximately the value of $\rho_2^*$ which corresponds to the ideal localization level of the eigenfunction, i.e., $\norm{\hat H(\rho_2^*)}_{1,1}\approx\norm{vv^T}_{1,1}$. We shall expect $\langle \Sigma, \hat H(\rho_2) \rangle$ as a function of $\rho_2$ to be (i) monotonically increasing on $[0,\rho_2^*]$ as the search area gradually expands to cover the true eigenvector, and (ii) monotonically decreasing on $[\rho_2^*,\infty)$
as the search area goes unnecessarily larger so that the estimation becomes more noisy.
When the true eigenvector $v$ is not localized, then we shall expect $\langle \Sigma, \hat H(\rho_2) \rangle$ to be monotonically decreasing as $\rho_2$ increases.
 The numerical study confirms the good performance of the cross-validation method. See \Cref{sec:simulation} for more details.

In some applications, we may not want to target the standard eigenfunction, but instead we may want to find orthogonal linear expansions that balance the interpretability (localization) and the capability of explaining the variance of the process. We therefore propose a second method of choosing $\rho_2$, which is based on the notion of fraction of variance explained (FVE). For a $p$-dimensional vector $v$ of unit length and a sup-norm consistent estimator $S$ of the covariance operator,
\begin{equation}
  \label{FVE}
  FVE(v) = v^TSv/totV(S),
\end{equation}
where $totV(S)$ is the sum of positive eigenvalues of $S$. We note that for dense and equally spaced observations with measurement error, the $FVE(v)$ defined above is not directly applicable to a sample covariance $S$ because the sample covariance is sup-norm consistent up to a shift $\sigma^2$ on the diagonal, where $\sigma^2$ is the error variance. To avoid serious bias by the nugget effect on the diagonal, one may use the eigenvalues from the smoothed covariance. Another practical way is to approximate $totV(S)$ by the sum of the first $M$ leading eigenvalues of $S$, for a finite number $M$. For a reasonable error level, the nugget effect bias $M \sigma^2$ is small compared to the sum of the first $M$ eigenvalues of $S$, which is of order $p$ \citep{kneip2011factor}, while the remaining true eigenvalues beyond $M$ are usually very small because the smoothness of $X(t)$ ensures fast decay of eigenvalues. 
In numerical experiments where $FVE$ is needed for determining the number of principal components to be included, we use $M=\min(20,p-2)$.

To sequentially select the sparsity parameter $\rho_2$ for the $j$th
eigenfunction. Suppose that we have estimated $\hat v_i$ for $1\le i\le j-1$.
Let $\hat v_{j}(\rho_1^*, \rho)$ be the solution of (\ref{opt2}) by using a fixed $\rho_1^*$ and $\rho_2 = \rho$, with $\hat \Pi_{j-1}$ being the projector of the subspace spanned by
$(\hat v_i:1\le i\le j-1)$. We can define 
\begin{equation}
  \label{rFVE}
 rFVE(\rho) =  \frac{FVE(\hat v_{j}(\rho_1^*, \rho))}{FVE(\hat v_{j}(\rho_1^*, 0))}
 =\frac{\hat v_j^T(\rho_1^*,\rho) S \hat v_j(\rho_1^*,\rho) }{\hat v_j^T(\rho_1^*,0) S \hat v_j(\rho_1^*,0) },
\end{equation} and choose $\rho_{2,j}$ as  
\begin{equation}
  \label{hatrho2_2}
 \max\{\rho\in \mathcal A_{2,j}: rFVE(\rho) \ge 1-a\},
\end{equation}
where $a \in [0,1)$ is the proportion of FVE that one chooses to sacrifice in return of localization.
For any $a\in[0,1)$, a $\rho$ satisfying \eqref{hatrho2_2} always exists, because $rFVE(\rho)  = 1$ for $\rho = 0$ and $rFVE(\rho) \in [0,1)$ for $\rho > 0$. 
Equation \eqref{rFVE} also suggests that $rFVE(\rho)$ can be calculated
without computing $totV(S)$. 
Although the first localized basis function explains less variance than the standard eigenfunction, the lost proportion is likely to be picked up by the second component, and we are still able to explain a large proportion of the total variance with a small number of components. We illustrate this method with real data analyses in Section 6. 

\section{Asymptotic Properties}\label{sec:theory}
In this section we establish the $\ell_2$ consistency of the proposed estimator in an asymptotic setting where both the sample size $n$ and the number of grid points $p$ increase.
We will first provide sufficient conditions on the tuning parameters $\rho_1$ and $\rho_2$
such that the LFPCA estimate is consistent.  Our second result provides further insights on how the localization penalty $\rho_2$ affects the rate of convergence.
We make the following assumptions.\\
\noindent \textbf{A1.} The input matrix $\InputMatrix$ in \eqref{opt2}
satisfies sup-norm consistency up to a constant shift on the diagonal: for some constant
$\alpha\ge 0$ and a sequence $e_n=o(1)$,
\begin{align*}&\max_{1\le l, l'\le p}|\InputMatrix(l,l')-\PopOperator(t_l,t_{l'})-\alpha \mathbf 1(l=l')|=O_P(e_n)\,,~~\text{as}~~(n,p)\rightarrow\infty\,.
\end{align*}

{\bf Remark}: Assumption (A1) puts a mild condition on the input matrix $\InputMatrix$
that can be satisfied by many standard estimators.  Consider functional data with dense and equally spaced observations. If $S$ is the sample covariance estimator from the raw data,
standard large deviation bounds such as Bernstein's inequality  (\cite{VanW96}, Chapter II) imply that Assumption (A1) holds with $e_n=\sqrt{\log p / n}$ if $\log p / n\rightarrow 0$
and the random curve $X(t)$ as well as the observation error in model \eqref{eq:obs-model} has sub-Gaussian tails; see also \cite{kneip2011factor}. In this case $\alpha=\sigma^2$, the noise variance. 
If a smoothed covariance estimator $S$ is used, the sup-norm rate can be $\sqrt{\log n/n}$ \citep{li2010uniform}.
The convergence in other norms such as Frobenius norm can be found in \citep{hall2006properties, hall2006properties2, bunea2014sample}. In general, the consistency result does not really depend on the observational design as long as we can get an estimate of the covariance operator whose sup-norm error vanishes as $n$ and $p$ increase. More discussions about cases where a sample covariance is not feasible can be found in Section 7.

 \noindent\textbf{A2.} There is a positive integer $k$ such that the
eigenvalues of $\PopOperator$ satisfies $\lambda_1 >, \dots, > \lambda_k > \lambda_{k+1}\geq,\dots,\geq 0$, with positive eigen-gap  $\delta \coloneqq\min_{1\leq j\leq k}(\lambda_j-\lambda_{j+1}) > 0$.\\
  \noindent\textbf{A3.} $\PopOperator$ is Lipschitz continuous:
  $$|\PopOperator(s,t)-\PopOperator(s',t')|\le
  L\max(|s-s'|,|t-t'|)\,,~~\forall~s,s',t,t'\,.$$
 \noindent \textbf{A4.} The $\ndim$ leading eigenfunctions of $\PopOperator$ have Lipschitz first derivatives:
  $$
  |\phi_j'(t)-\phi_j'(s)|\le L|t-s|\,,~~\forall~1\le j\le \ndim\,.
  $$

\begin{theorem}[$\ell_2$ consistency]
  \label{thm:main-l2}
  Under assumptions (A1-A4), if  $\SmoothPenalty/p^5 \rightarrow 0$ and
   $\SparsePenalty\rightarrow 0$ as $(n,p)\rightarrow\infty$, then
   for $k$ as defined in A2, we have
   \begin{align*}
     &\sup_{1\le j\le k}\norm{\hat \phi_j(t)-\phi_j(t)}_2\stackrel{P}{\rightarrow} 0\,.
   \end{align*}
\end{theorem}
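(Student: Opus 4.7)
The overall strategy is induction on $j$. At each step the plan is to (i)~exhibit an \emph{oracle} point $H_j^\dagger \in \D{\EstProjection_{j-1}}$ that is close to $v_j^\star (v_j^\star)^T$, where $v_j^\star$ denotes the $j$-th unit eigenvector of the discretized population covariance $\PopOperator_{\mathrm{disc}}:=(\PopOperator(t_l,t_{l'}))_{l,l'}$; (ii)~use optimality of $\Estimator_j$ together with feasibility of $H_j^\dagger$ to get the basic inequality $\innerp{\InputMatrix - \SmoothPenalty\DiffOperator}{H_j^\dagger-\Estimator_j} + \SparsePenalty(\norm{H_j^\dagger}_{1,1}-\norm{\Estimator_j}_{1,1}) \le 0$; (iii)~apply a curvature inequality for the deflated Fantope (a direct adaptation of the Fantope curvature lemma of \citealp{vu2013fantope}) to lower-bound the left-hand side by $(\gamma_j/2)\norm{\Estimator_j - H_j^\dagger}_F^2$, where $\gamma_j$ is the top eigen-gap of $(I-\EstProjection_{j-1})\PopOperator_{\mathrm{disc}}(I-\EstProjection_{j-1})$; (iv)~pass from this Frobenius bound to an $\ell_2$ bound on $\hat v_j$ via Davis--Kahan, and from $\ell_2$ on $\hat v_j$ to $L^2$ on $\hat\phi_j$ via A4.

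A short preliminary reconciles the discrete and continuous worlds. Under A3, a Nystr\"om-type argument gives $\PopOperator_{\mathrm{disc}}$ eigenvalues $\mu_j = p\lambda_j(1+o(1))$ and $\norm{v_j^\star - p^{-1/2}\phi_j(t_\cdot)}_2 = O(p^{-1})$; by A2 the discrete eigen-gap is then of order $p\delta$. Under A4, linear interpolation of $\sqrt p\, v_j^\star$ recovers $\phi_j$ in $L^2$ up to $O(p^{-2})$. The diagonal-shift constant $\alpha$ in A1 is irrelevant because every feasible $H$ has $\mathrm{tr}(H)=1$, so $\innerp{\alpha I}{H}=\alpha$ drops out of the objective, and the stochastic part of A1 contributes only $\innerp{\InputMatrix - \PopOperator_{\mathrm{disc}}}{H_j^\dagger - \Estimator_j} = O_P(e_n\,\norm{H_j^\dagger - \Estimator_j}_{1,1})$.

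For the base case $j=1$, $H_1^\dagger := v_1^\star(v_1^\star)^T$ is feasible in $\D{\EstProjection_0}=\Fantope{1}$. Combining the basic inequality with A1 and the curvature bound yields
\[
(p\delta/2)\,\norm{\Estimator_1 - H_1^\dagger}_F^2 \le C\bigl(e_n + \SmoothPenalty\, \innerp{\DiffOperator}{H_1^\dagger} + \SparsePenalty\bigr).
\]
A4 implies $\innerp{\DiffOperator}{H_1^\dagger} = (v_1^\star)^T\DiffOperator\, v_1^\star \asymp p^{-4}$ (each second difference is $O(p^{-2})$, summed over $p$ grid entries of $v_1^\star$ of size $O(p^{-1/2})$), so the hypothesis $\SmoothPenalty/p^5 \to 0$ exactly ensures that the smoothing contribution is $o_P(1)$ after division by the eigen-gap $p\delta$. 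For $j>1$ the induction hypothesis is $\norm{\EstProjection_{j-1} - \sum_{i<j} v_i^\star(v_i^\star)^T}_F = o_P(1)$; the target $v_j^\star (v_j^\star)^T$ is not feasible in $\D{\EstProjection_{j-1}}$, so we take $H_j^\dagger$ to be the rank-one projection of $v_j^\star$ orthogonalized against $\mathrm{range}(\EstProjection_{j-1})$ and renormalized. A2 and the inductive hypothesis then give $\norm{H_j^\dagger - v_j^\star(v_j^\star)^T}_F = o_P(1)$, and after an orthogonal change of basis to $\mathrm{range}(\EstProjection_{j-1})^\perp$ the deflated Fantope becomes a Fantope of degree one in dimension $p-(j-1)$ whose deflated covariance still has top gap of order $p\delta$. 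The same basic-inequality + curvature argument then delivers $\norm{\Estimator_j - v_j^\star (v_j^\star)^T}_F = o_P(1)$, closing the induction.

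The main technical obstacle is the induction step, where three error contributions are coupled through a random constraint set: the sampling error $e_n$, the smoothing bias $\SmoothPenalty\innerp{\DiffOperator}{\cdot}$, and the localization bias $\SparsePenalty\norm{\cdot}_{1,1}$. Checking that $\norm{H_j^\dagger}_{1,1}$ stays $O(1)$ after deflation (using $\norm{v_j^\star}_\infty = O(p^{-1/2})$ from A4) and that the deflated covariance retains its $p\delta$ gap under the random perturbation $\EstProjection_{j-1}\to \sum_{i<j} v_i^\star (v_i^\star)^T$ is the heart of the argument; once this is in place, Davis--Kahan on $\Estimator_j$ followed by the A4 interpolation estimate converts the Frobenius bound into the claimed $L^2$ bound, and the uniformity over $j\le k$ follows from a finite union bound.
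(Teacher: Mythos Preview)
Your outline follows essentially the same architecture as the paper's proof: induction on $j$, an oracle feasible point obtained by orthogonalizing the discretized eigenvector against $\EstProjection_{j-1}$, the Fantope curvature inequality, and a basic inequality from optimality of $\Estimator_j$. The discretization analysis and the passage from $\hat v_j$ to $\hat\phi_j$ are also handled the same way.

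There is, however, a real gap in the induction step concerning the smoothing term. At step $j>1$ your oracle is $H_j^\dagger=\tilde u_j\tilde u_j^T$ with $\tilde u_j\propto (I-\EstProjection_{j-1})u_j=u_j-\sum_{i<j}\innerp{\hat v_i}{u_j}\hat v_i$, so
\[
\SmoothPenalty\innerp{\DiffOperator}{H_j^\dagger}=\SmoothPenalty\norm{\DiffOperatorHalf\tilde u_j}_2^2
\]
involves $\SmoothPenalty\norm{\DiffOperatorHalf\hat v_i}_2^2$ for $i<j$. You have no a~priori control on the smoothness of the \emph{estimated} $\hat v_i$: the crude bound $\norm{\DiffOperatorHalf\hat v_i}_2\le\norm{\DiffOperatorHalf}_{\mathrm{op}}=O(1)$ only yields $\SmoothPenalty\innerp{\DiffOperator}{H_j^\dagger}=O(\SmoothPenalty\,\epsilon_{j-1}^2)$, and after dividing by the eigen-gap $p\delta$ this requires $\SmoothPenalty/p\to 0$, which is far stronger than the stated hypothesis $\SmoothPenalty/p^5\to 0$. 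The paper closes this gap by carrying a second induction quantity $\beta_i\coloneqq\SmoothPenalty\innerp{\DiffOperator}{\hat v_i\hat v_i^T}$ and proving $\beta_i=o_P(p)$; the key is to \emph{retain} the term $-\SmoothPenalty\innerp{\DiffOperator}{\Estimator_j}$ in the basic inequality (rather than dropping it as non-positive), so that the same quadratic inequality that bounds $\norm{\Estimator_j-u_ju_j^T}_F$ simultaneously bounds $\SmoothPenalty\innerp{\DiffOperator}{\Estimator_j}$, and hence $\beta_j$ via the leading-eigenvalue lower bound $\lambda_{j,1}\ge 1-\epsilon_j/2$. Your induction hypothesis needs to be augmented with this smoothness quantity.

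A minor correction: your claim that $\norm{H_j^\dagger}_{1,1}=O(1)$ is off by a factor of $p$. Since $\norm{v_j^\star}_\infty=O(p^{-1/2})$ gives only $\norm{v_j^\star}_1=O(p^{1/2})$, one has $\norm{H_j^\dagger}_{1,1}=O(p)$ (and in any case $\norm{H}_{1,1}\le p\norm{H}_F\le p$ for all $H\in\Fantope{1}$). This does not break the argument because $\SparsePenalty\cdot O(p)/(p\delta)=O(\SparsePenalty)\to 0$, but the displayed inequality in your base case should read $C(e_n p+\SmoothPenalty\innerp{\DiffOperator}{H_1^\dagger}+\SparsePenalty p)$ on the right, with the division by $p\delta$ coming afterwards.
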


The proof of \Cref{thm:main-l2} is deferred to the Appendix. Here we outline the proof, highlighting some key technical challenges.

Let $u_{j}$ be the unit vector obtained by discretizing and re-normalizing the eigenfunction $\phi_j(t)$. 
We will prove \Cref{thm:main-l2} by proving
$$\sup_{1\le j\le k}\norm{\hat v_j - u_j}_2\stackrel{P}{\rightarrow} 0.$$

Let $\PopMatrix$
be the discretized covariance operator, with a possible constant shift $a$ on the diagonal. Roughly speaking, $\hat v_j$ and $u_j$ approximate the $j$th eigenvector of $\InputMatrix$ and $\PopMatrix$, respectively. We hope to establish the following inequality based on the standard Davis-Kahan $\sin\Theta$ theorem (\cite{Bhatia97}, Theorem VII.3.1),
\begin{align}\label{eq:simple-bound}
  \norm{\hat v_j \hat v_j^T-u_j u_j^T}_F\le \frac{c}{\Gap p}\norm{\InputMatrix - \PopMatrix}_F\le \frac{c}{\Gap} \norm{\InputMatrix-\PopMatrix}_{\infty,\infty}\,,
\end{align}
provided that $\InputMatrix$ and $\PopMatrix$ have eigengap of order $\Gap p$ (\Cref{lem:popmatrix}), where
$\norm{A}_F=\innerp{A}{A}^{1/2}$ is the Frobenius norm and $\norm{A}_{\infty,\infty}$
is the maximum absolute value of all entries in $A$.

However, to rigorously obtain an approximated version of
\eqref{eq:simple-bound} is non-trivial.  First, $u_j$ is not an
eigenvector of $\PopMatrix$ because of the discretization error. The discretization error will be explicitly tracked in all subsequent analysis
when comparing $u_j$ with $\hat v_j$, for example, in the characterization of  population PCA problem (\Cref{lem:approx-curvature}).
Second, $\hat v_j$ is obtained by solving a penalized eigenvector problem over the deflated Fantope, and hence
is not directly comparable to its ideal theoretical counterpart $u_j$, which may not be in the feasible set
of problem \eqref{opt2}.  To overcome this difficulty, we will consider a modified version of $u_j$ that is feasible for \eqref{opt2} but still possesses similar smoothness as well as proximity
to the true eigenvector of $\PopMatrix$. Furthermore, the sequential estimation procedure \eqref{opt2} involves deflation based
on estimated projection matrix $\EstProjection_{j-1}$, which carries over estimation error from
previous steps. We will use an induction argument to control the sequential error accumulation.

Due to the sequential error accumulation, the convergence rate involves $\rho_1$ and $\rho_2$ in a complex way. To provide insights for our localized estimation procedure, the following theorem shows how the rate of convergence depends on $\rho_2$ when the smoothing parameter $\SmoothPenalty=0$.
The proof is included in the Appendix.
\begin{theorem}[Rate of convergence]
  \label{thm:rate}
  Under assumptions (A1-A4), if  
  $\SmoothPenalty=0$ and $\SparsePenalty\rightarrow 0$ as $(n,p)\rightarrow\infty$, then for $k$ as defined in A2, we have
   \begin{align*}
     &\sup_{1\le j\le k}\norm{\hat \phi_j(t)-\phi_j(t)}_2
     =O_P(e_n+\SparsePenalty+p^{-1})\,.
   \end{align*}
\end{theorem}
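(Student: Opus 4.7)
My plan is to follow the outline sketched for \Cref{thm:main-l2} while quantifying each approximation with an explicit rate, and to proceed by induction on $j\in\{1,\dots,k\}$. The inductive hypothesis carries $\|\hat\Pi_{j-1}-\Pi_{j-1}^*\|_F=O_P(e_n+\SparsePenalty+p^{-1})$, where $\Pi_{j-1}^*$ is the orthogonal projector onto the span of the first $j-1$ eigenvectors of the discretized covariance $\Sigma$; the base case $j=1$ is trivial. For the discretization, let $u_j\in\Real^p$ be $(\phi_j(t_l))_l$ renormalized to unit Euclidean length and $v_j^*$ the $j$-th eigenvector of $\Sigma$. Riemann approximation combined with (A3) and (A4) gives $\|\Sigma u_j-p\lambda_j u_j\|_2=O(1)$ and locates the top $k$ eigenvalues of $\Sigma$ within $O(1)$ of $p\lambda_j$, so the Davis--Kahan $\sin\Theta$ theorem for approximate eigenvectors yields $\|u_j-v_j^*\|_2=O(p^{-1}/\delta)$; this is the origin of the $p^{-1}$ term.

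Next, take as target the ideal feasible maximizer $\tilde H_j\coloneqq\arg\max_{H\in\D{\hat\Pi_{j-1}}}\langle\Sigma,H\rangle$, which is the rank-one projection onto the top eigenvector $\tilde v_j$ of $(I-\hat\Pi_{j-1})\Sigma(I-\hat\Pi_{j-1})$. The inductive hypothesis, combined with Davis--Kahan applied to the rank-$O(j)$ perturbation from $(I-\Pi_{j-1}^*)\Sigma(I-\Pi_{j-1}^*)$, gives $\|\tilde H_j-v_j^*(v_j^*)^T\|_F=O_P(e_n+\SparsePenalty+p^{-1})$ and locates the top-eigenvalue gap of the restricted $\Sigma$ at $p\delta(1+o_P(1))$. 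Optimality of $\hat H_j$ against $\tilde H_j$ with $\SmoothPenalty=0$ then yields the basic inequality
\begin{equation*}
\langle \Sigma,\tilde H_j-\hat H_j\rangle \le \langle S-\Sigma,\hat H_j-\tilde H_j\rangle + \SparsePenalty(\|\tilde H_j\|_{1,1}-\|\hat H_j\|_{1,1})\le p(e_n+\SparsePenalty)\|\hat H_j-\tilde H_j\|_F,
\end{equation*}
where I used (A1) and the elementary inequality $\|A\|_{1,1}\le p\|A\|_F$. The factor $p$ on the right is exactly what the eigengap scaling will absorb to produce the \emph{linear} dependence on $\SparsePenalty$ claimed by the theorem.

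The standard Fantope curvature inequality at the maximizer $\tilde H_j$ lower-bounds the left-hand side by $(p\delta/2)(1+o_P(1))\|\tilde H_j-\hat H_j\|_F^2$; combining with the display above gives $\|\tilde H_j-\hat H_j\|_F=O_P((e_n+\SparsePenalty)/\delta)$. A $\sin\Theta$ conversion then gives $\|\hat v_j-\tilde v_j\|_2$ of the same order, and triangulating with the discretization and target bounds produces $\|\hat v_j-u_j\|_2=O_P(e_n+\SparsePenalty+p^{-1})$. Linear interpolation from $\hat v_j$ to $\hat\phi_j(t)$ costs another $O(p^{-1})$ by (A4), while the $L^2$--$\ell_2$ conversion is exact up to Riemann error, so $\|\hat\phi_j-\phi_j\|_2=O_P(e_n+\SparsePenalty+p^{-1})$. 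The induction step closes via $\|\hat\Pi_j-\Pi_j^*\|_F\le\|\hat\Pi_{j-1}-\Pi_{j-1}^*\|_F+\|\hat v_j\hat v_j^T-v_j^*(v_j^*)^T\|_F$, which is at the claimed rate.

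The main obstacle will be verifying the curvature inequality and the target proximity $\|\tilde H_j-v_j^*(v_j^*)^T\|_F$ through the \emph{perturbed} deflation. Both depend on $\|\hat\Pi_{j-1}-\Pi_{j-1}^*\|_F$ through Weyl and Davis--Kahan bounds, and the subtle point is that although $(I-\hat\Pi_{j-1})\Sigma(I-\hat\Pi_{j-1})$ differs from $(I-\Pi_{j-1}^*)\Sigma(I-\Pi_{j-1}^*)$ by a rank-$O(j)$ matrix of operator norm $O(p\|\hat\Pi_{j-1}-\Pi_{j-1}^*\|_F)$, the eigengap of order $p\delta$ remains dominant so the curvature coefficient survives at the stated order. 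The remainder is careful bookkeeping of factors of $p$ and $\delta$ through each step.
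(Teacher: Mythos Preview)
Your argument is correct, but it takes a genuinely different route from the paper's own proof. You adapt the basic-inequality-plus-curvature machinery used in the consistency proof (\Cref{thm:main-l2}): compare $\hat H_j$ against the ideal feasible target $\tilde H_j=\arg\max_{H\in\D{\hat\Pi_{j-1}}}\langle\Sigma,H\rangle$, bound the linear form by $p(e_n+\SparsePenalty)\|\hat H_j-\tilde H_j\|_F$, and cash in the Fantope curvature lower bound to solve for $\|\hat H_j-\tilde H_j\|_F$. This is sound; the only delicate point you correctly flag is that the curvature coefficient for the deflated problem is governed by the top eigengap of $(I-\hat\Pi_{j-1})\Sigma(I-\hat\Pi_{j-1})$, which stays at order $p\delta$ by Weyl and the inductive hypothesis.

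The paper instead bypasses the curvature lemma entirely by a \emph{dual-variable reformulation}: writing $\SparsePenalty\|H\|_{1,1}=\max_{Z\in\mathbb B_1}\SparsePenalty\langle Z,H\rangle$, any optimizer $H^*$ of \eqref{eq:H_i} satisfies $H^*=\arg\max_{H\in\D{\hat\Pi_{j-1}}}\langle S-\SparsePenalty Z^*,H\rangle$ for the optimal dual $Z^*$. This recasts the penalized problem as an \emph{unpenalized} Fantope eigenproblem for the perturbed matrix $\Sigma+(S-\Sigma)-\SparsePenalty Z^*$, whose Frobenius-norm perturbation is at most $(e_n+\SparsePenalty)p$. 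One then rewrites the deflated problem as an ordinary Fantope problem for $(I-\hat\Pi_{j-1})(S-\SparsePenalty Z^*)(I-\hat\Pi_{j-1})$, decomposes this as $T_0+T_1+T_2$ with $T_0=(I-\Pi_{j-1})\Sigma(I-\Pi_{j-1})$, and applies Davis--Kahan directly. A side benefit is that the argument shows $H^*$ is exactly rank one (hence $H^*=\hat v_j\hat v_j^T$) once the perturbation is below the eigengap, avoiding the separate $\sin\Theta$ conversion from $\hat H_j$ to $\hat v_j$ that your route requires. Your approach, on the other hand, stays closer to the primal optimization and reuses the same ingredients as the consistency proof, which makes the two theorems feel more unified. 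Either way the final rate is the same.
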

The three parts in the rate of convergence correspond to covariance estimation error, bias caused by localization penalty, and discretization error, respectively.
According to the discussion after Assumption A1, the sup-norm covariance estimation error
$e_n$ can be made as small as $\sqrt{\log p/n}$ or $\sqrt{\log n/n}$ depending on
the estimating method used and observation scheme. Thus if $p$ grows at the same or higher
order than $\sqrt{n}$, and $\rho_2=O(e_n)$, our LFPCA estimate achieves an error rate of $e_n$ within a logarithm factor from the standard FPCA error rate.

\section{Numerical Study}\label{sec:simulation}
\begin{figure}[t]
\centerline{
\includegraphics[scale=0.45]{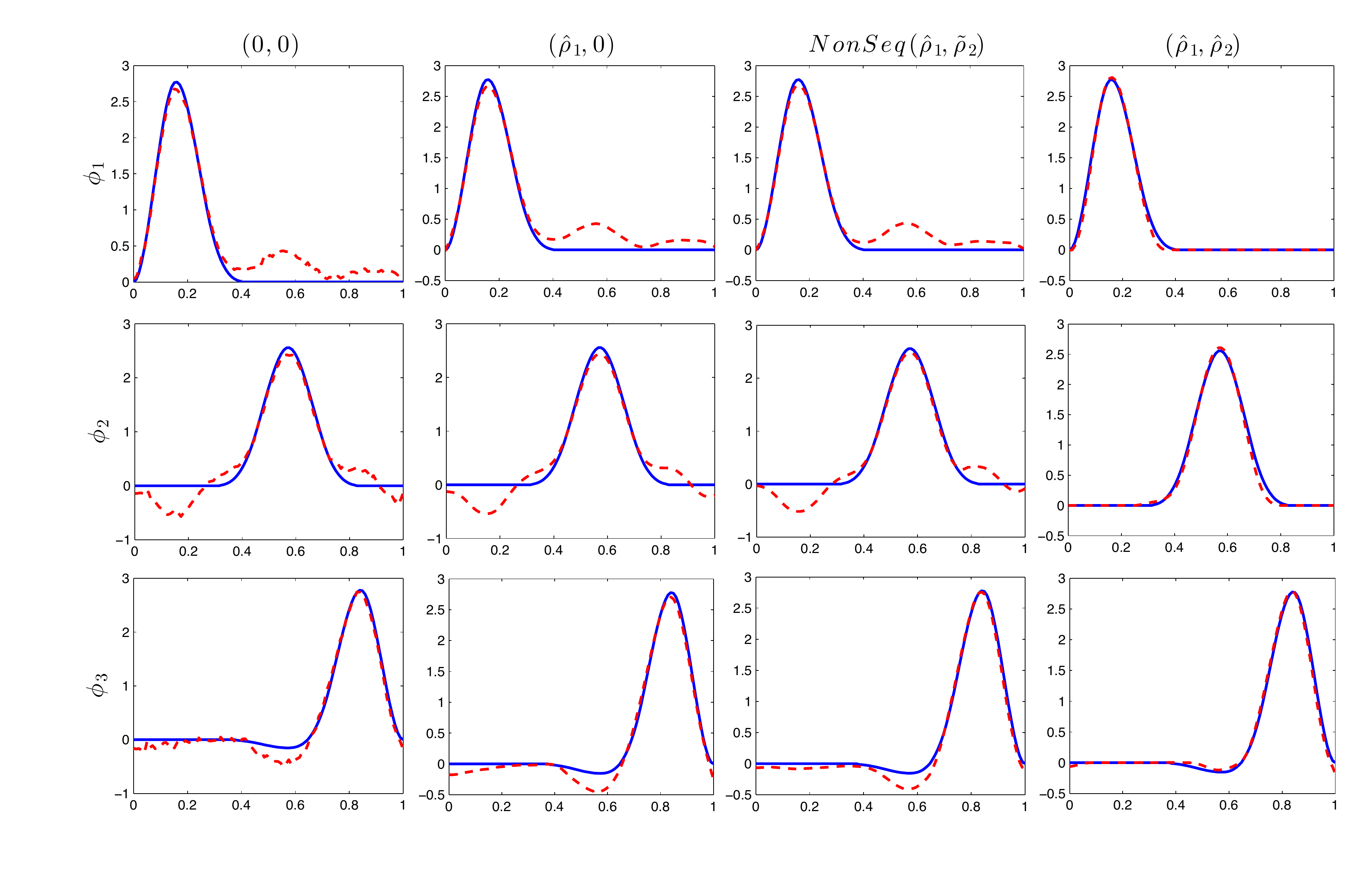}
}
\caption{True (blue-solid) and estimated (red-dashed) eigenfunctions $\phi_j(t)$, $j=1, 2, 3$,  in one run of  {\it Simulation I}, with $n=100$, $p=100$ and $\sigma = 1$, by four different methods as described in Section 5. Tuning parameters are chosen by 5-fold cross-validation.}
\label{Fig:Eigens}
\end{figure}

To illustrate our methods for localized functional principal component analysis, we conduct simulations under two scenarios, {\it Simulation I: localized case} and {\it Simulations II: non-localized case}. For {\it Simulation I},
data $\{Y_{il}, i=1,\dots, n, l = 1, \dots, p \}$ are generated according to model (\ref{eq:obs-model}),
where $t_{l}$ are equally spaced observational points on $[0,1]$. We set $\mu(t) = 0$, $\xi_{ij}\sim N(0,\lambda_j)$, independent, with $\lambda_j$ taken from $(4^2,\ 3^2,\
2.5^2,\ 1.25^2,\ 1, \ 0.75^2,\ 0.5^2,\
0.25^2)$ and $\lambda_j = 0 $ for $j>8$, and the measurement errors
$\epsilon_{il} \stackrel{iid}{\sim} N(0,\sigma^2)$.
We generate the eigenfunctions as follows.  Let 
$\tilde \phi_1(t) =B_3(t)$, $\tilde \phi_2(t) = B_6(t)$ and $\tilde \phi_3(t) = B_9(t)$,  where $B_b(t)$ is the $b$th cubic B-spline basis on $[0,1]$, with 8 equally spaced
interior knots. For $j > 3$, $\tilde \phi_j(t)=\sqrt2\cos((j+1)\pi
t)$ for odd values of $j$ and $\tilde \phi_j(t)=\sqrt2\sin( j \pi t)$ for even values of $j$, $0\leq t\leq 1$.
Then $\phi_j(t),~1\leq j \leq 8,$ are obtained by applying Gram-Schmidt orthonormalization on the set of $\tilde \phi_j(t),~1\leq j\leq 8$. For {\it Simulation II}, we use $\phi_j(t)=\sqrt2\cos((j+1)\pi
t)$ for odd values of $j$ and $\phi_j(t)=\sqrt2\sin( j \pi t)$ for even values of $j$, $0\leq t\leq 1$, for $1\leq j\leq 8$. The rest is the same as {\it simulation I}.

We investigate the performance of the proposed LFPCA under varying combinations of sample size $n$, number of observations per curve $p$, and noise level $\sigma^2$. More importantly, we compare the estimates given by four different methods (i) $(\rho_1, \rho_2) = (0,0)$ corresponds to the ordinary PCA estimation directly obtained from the sample covariance;
(ii) $(\rho_1, \rho_2) = (\hat\rho_1, 0)$ corresponds to the smoothed eigenfunction estimation without localization, where $\hat \rho_1$ was chosen by 5-fold cross-validation as discussed in Section 3.2. Empirically, we found these estimated eigenfunctions almost identical to those estimated from a smoothed covariance function or pre-smoothed individual curves; (iii) {\it NonSeq} corresponds to the subspace method developed in \cite{vu2013fantope}. For comparison purpose, we incorporate the roughening matrix, i.e., use $S-\rho_1 D$ as input matrix in \Cref{eq:pen-pca-conv} with $\hat\rho_1$ chosen by 5-fold cross-validation, the same as used in (ii) and (iv). The sparse tuning parameter  is chosen by 5-fold cross validation $\tilde \rho_2 = \argmax_{\rho \in \mathcal A_2}\sum_{v=1}^{V}\innerp{H^{(-v)}(\rho_1, \rho)}{S^{v}}$. We also note that their proposed method only outputs $k$ basis vectors of the $k$-dimensional subspace, and one needs to rotate that basis to obtain eigenvectors.  (iv) $(\rho_1, \rho_2)  = (\hat \rho_1, \hat \rho_2)$ corresponds to the proposed LFPCA with tuning parameters selected by 5-fold cross-validation as detailed in (\ref{hatrho1}) and (\ref{hatrho2}).
Each setting is repeated 200 times to assess the average performance. The number of included components $k$ is chosen to account for at least $85\%$ of the total variance, i.e. $\sum_{j=1}^{k}FVE(\hat v_j) \ge 85\%$, where FVE is defined in \Cref{FVE}. The selected number of $k$ is quite robust among all simulation settings, and the average number over 200 simulations is 3.01.

\begin{figure}[t]
\centerline{
\includegraphics[scale=0.5]{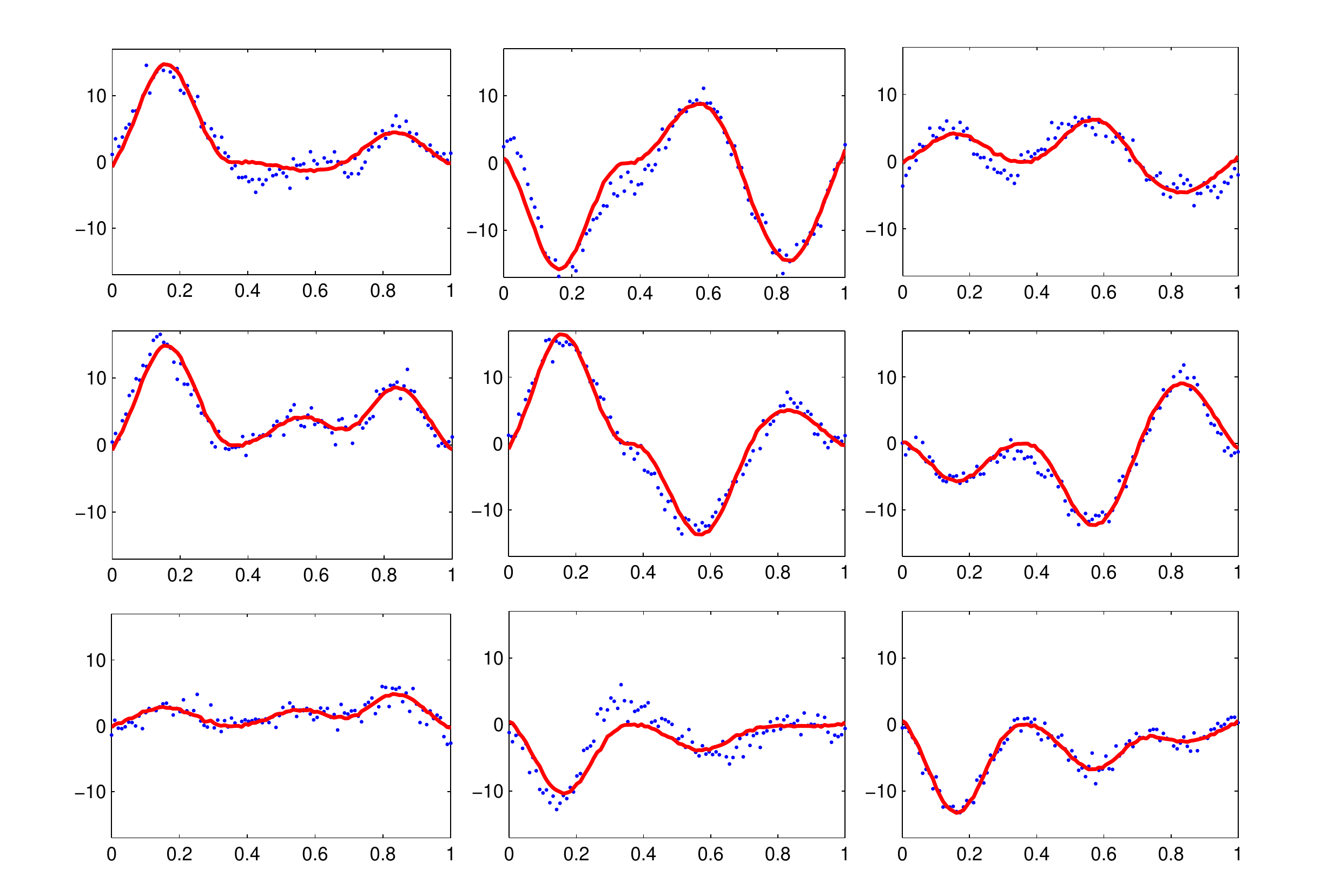}
}
\caption{Noisy observations and recovered functions $\hat X_i(t)$ (red-solid) for nine randomly selected subjects, as obtained in one run of {\it simulation I} with $n=100$, $p=100$ and $\sigma = 1$, using $(\hat \rho_1,\hat \rho_2)$ chosen by 5-fold cross-validation.}
\label{Fig:fittedX}
\end{figure}

\begin{figure}[t]
\centerline{
\includegraphics[scale=0.4]{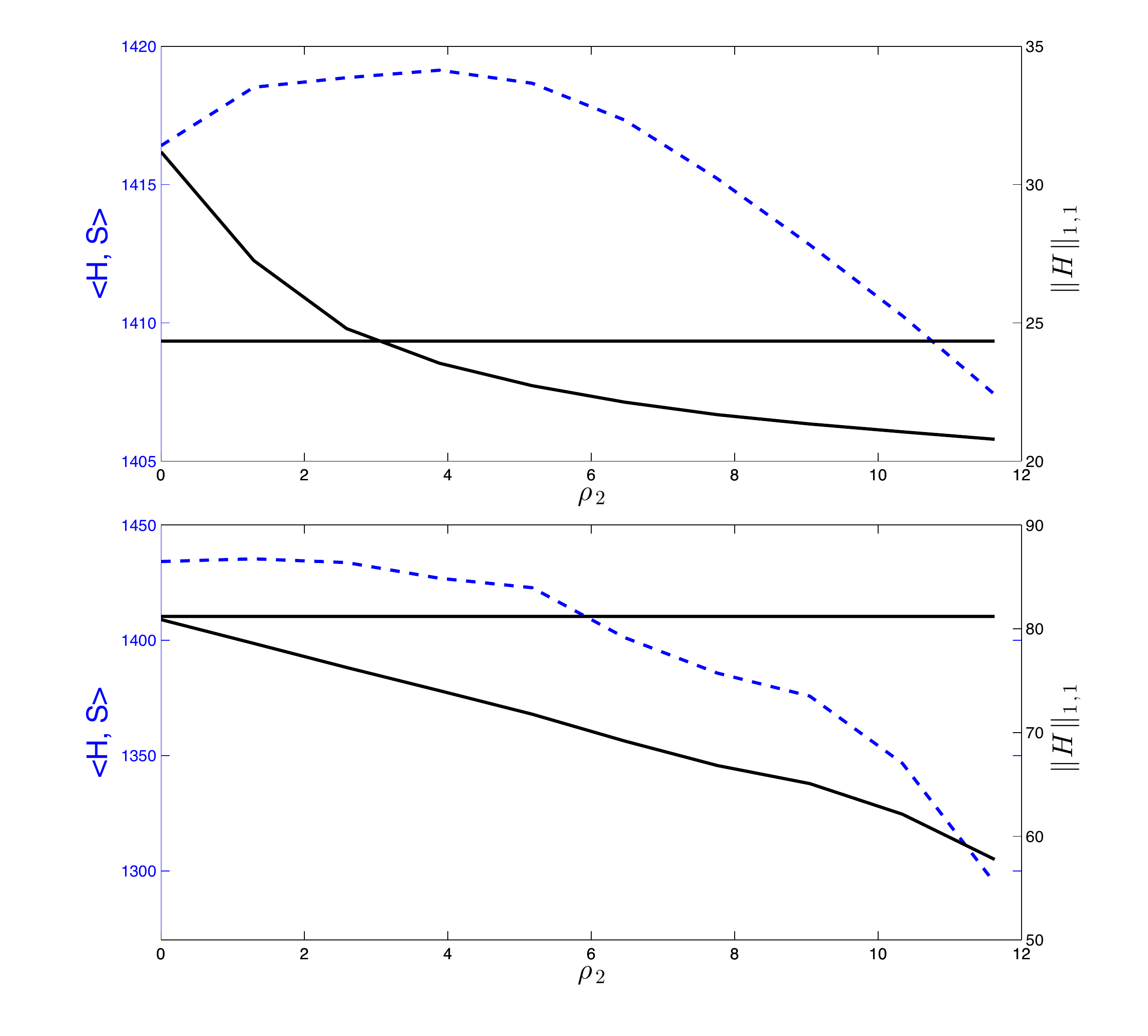}
}
\caption{Performance of the 5-fold cross-validation to choose $\rho_2$ for $\phi_1$. 
The top panel is for {\it Simulation I: localized case}: The peak of the 
cross-validated inner product $\inner{H}{S}$ (dashed blue line, left $y$ label) 
corresponds to $\hat\rho_2 = 3.8$.  The estimated $\phi_1$ is more localized as 
$\rho_2$ increases, and an ideal $\rho_2$ is where the $\norm{\cdot}_{1,1}$ of 
estimated $H$ meets that of true discretized projection matrix corresponding to 
$\phi_1$ (indicated by the horizontal line). The bottom panel is for {\it Simulation 
II: non-localized case}: $\hat\rho_2 = 1.2$. }
\label{Fig:choserho2}
\end{figure}

The estimated eigenfunctions $\phi_j(t),~j=1,2,3,$ from a typical run of {\it Simulation I} with $p = 100$, $n= 100$, and $\sigma = 1$ are visualized in \Cref{Fig:Eigens}. The four columns correspond to results given by the four methods described above. One can clearly see the improvement by adding smoothing and localization penalties. The 5-fold cross-validation choice $(\hat \rho_1, \hat\rho_2)$ leads to almost perfect recovery of the true eigenfunctions. The fitted curves $\hat X_i(t) = \hat \mu(t) + \sum_{j=1}^{k}\hat \xi_{ij}\hat\phi_j(t)$ for nine randomly chosen subjects are shown in Figure \ref{Fig:fittedX}, where $\hat\phi_j(t)$ are obtained using $(\hat \rho_1, \hat\rho_2)$. It demonstrates accurate recovery of the true curves $X_i(t)$.

\begin{table}[t]
\caption{Results for simulation: reporting the median of errors ($\norm{\phi_j-\hat\phi_j}_2$) for $\phi_j$, $j = 1,2,3,$ (with median absolute deviations in parentheses) over 200 simulation runs, with $\sigma =1$, $p= 100$, and varying sample sizes $n$,  where {\it NonSeq} is the subspace method developed in \cite{vu2013fantope}. }\label{simu}
\begin{center}
\begin{tabular}{|c|c|cccccc|}
  \hline
  \hline
  \multicolumn{2}{|c|}{}&\multicolumn{3}{c}{\it Simulation I: Localized }&\multicolumn{3}{c|}{\it Simulation II: Non-localized }\\
\multicolumn{2}{|c|}{}& $n=50$ & $n=100$ & $n=200$ & $n= 50$ & $n=100$ & $n= 200$ \\
 \hline
 & (0, 0)  & 0.22 (0.09) & 0.18 (0.06) & 0.13 (0.04) & 0.22 (0.08) & 0.15 (0.05) & 0.12 (0.04) \\ 
  $\phi_1$ &  ($\hat\rho_1$, 0) & 0.22 (0.09) & 0.18 (0.07) & 0.13 (0.04) & 0.21 (0.08) & 0.15 (0.06) & 0.11 (0.04) \\ 
  & {\it NonSeq}& 0.22 (0.09) & 0.18 (0.07) & 0.13 (0.04) & 0.21 (0.08) & 0.15 (0.06) & 0.11 (0.03) \\ 
   &($\hat\rho_1$, $\hat\rho_2$)& 0.12 (0.05) & 0.10 (0.03) & 0.06 (0.02) & 0.22 (0.08) & 0.15 (0.05) & 0.13 (0.03) \\
   \hline 
  & (0, 0) & 0.42 (0.16) & 0.30 (0.13) & 0.20 (0.07) & 0.42 (0.16) & 0.29 (0.12) & 0.19 (0.07) \\ 
 $\phi_2$ &  ($\hat\rho_1$, 0) & 0.41 (0.16) & 0.30 (0.13) & 0.20 (0.07) & 0.41 (0.15) & 0.28 (0.11) & 0.19 (0.06) \\ 
  & {\it NonSeq} & 0.40 (0.16) & 0.30 (0.12) & 0.20 (0.07) & 0.41 (0.15) & 0.28 (0.11) & 0.19 (0.07) \\ 
   &($\hat\rho_1$, $\hat\rho_2$) & 0.26 (0.17) & 0.14 (0.07) & 0.11 (0.05) & 0.41 (0.17) & 0.28 (0.11) & 0.20 (0.07) \\ 
   \hline
   & (0, 0)  & 0.37 (0.12) & 0.27 (0.11) & 0.18 (0.07) & 0.31 (0.10) & 0.26 (0.09) & 0.18 (0.06) \\ 
 $\phi_3$ &  ($\hat\rho_1$, 0) & 0.36 (0.12) & 0.27 (0.11) & 0.18 (0.08) & 0.30 (0.10) & 0.26 (0.09) & 0.17 (0.06) \\ 
 & {\it NonSeq} & 0.33 (0.14) & 0.27 (0.11) & 0.18 (0.07) & 0.30 (0.10) & 0.26 (0.09) & 0.18 (0.06) \\ 
   &($\hat\rho_1$, $\hat\rho_2$) & 0.24 (0.15) & 0.14 (0.08) & 0.09 (0.04) & 0.31 (0.11) & 0.26 (0.09) & 0.18 (0.06) \\ 
   \hline
   \hline
\end{tabular}
\end{center}

\end{table}

 To better quantify the performance of estimating $\phi_j(t)$ we report the $\ell_2$ distance $\norm{\phi_j(t) - \hat\phi_j(t)}_2$.
The medians of the errors over 200 simulation runs are reported in Table \ref{simu}.
The results are quite similar for different levels of $p$ and $\sigma$, so only results for $p=100$ and $\sigma = 1$  are reported with various sample size $n$.  The errors are found to decline with increasing
sample size $n$, as expected. 
 For {\it Simulation I: localized case}, the proposed LFPCA with $\hat \rho_2$ chosen by cross-validation significantly outperforms other methods. The $\hat \rho_2$ chosen by cross-validation well approximates the true localization level of the eigenfunctions; see Figure \ref{Fig:choserho2} for an illustration and detailed discussions can be found in section 3.2. The {\it NonSeq}, a subspace method proposed by \cite{vu2013fantope}, does not perform well since the union of the subdomains under consideration is the entire domain, not `sparse' at all in their setting. The results demonstrate the advantage of our proposed sequential method. For {\it Simulation II: non-localized case}, 5-fold cross-validation method combined with the proposed LFPCA choose $\hat\rho_{2j}$ to be very close to 0, and as expected, the $\ell_2$ errors of the four methods are almost identical.


\section{Data Applications}\label{sec:data}
\subsection{Application to Country Mortality Data}

\begin{figure}[t]
\centerline{
\includegraphics[scale=0.5]{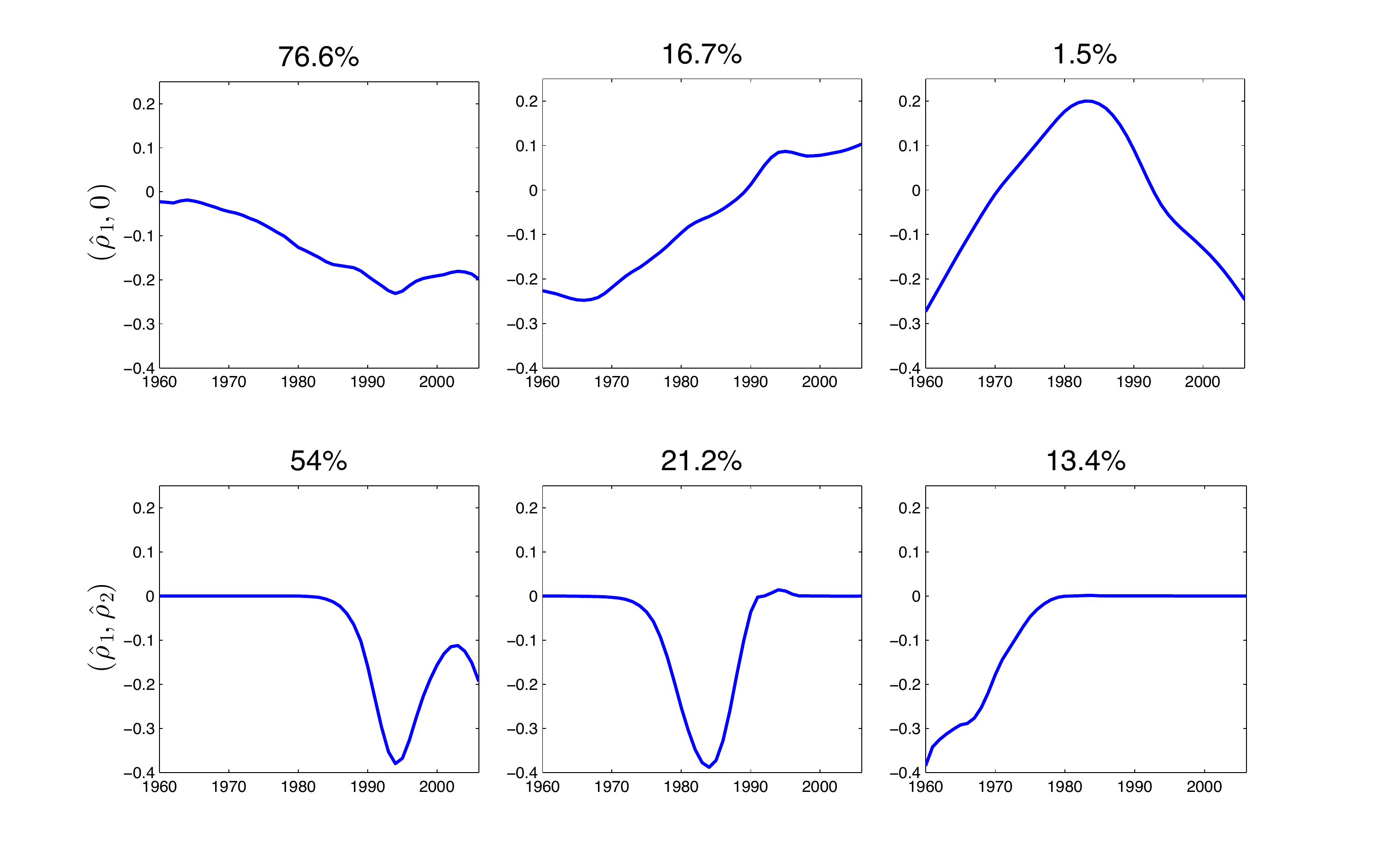}
}
\caption{Top Row: Estimated eigenfunctions for the mortality data, $\hat\rho_1$ is chosen by 5-fold cross validation; Bottom Row: Estimated orthogonal basis functions, $\hat\rho_2$ is chosen to maintain rFVE  at 70\% in (\ref{rFVE}), and the number of components $k = 3$ is chosen to explain at least 85\% of the total variance.}
\label{Fig:mortality}
\end{figure}
The analysis of human mortality is important in assessing the future demographic
prospects of societies, and quantifying differences between countries with regard to
the overall public health measure. Functional data analysis approaches have been previously applied to
study mortality data \citep{hyndman2007robust,chiou2009modeling,chen2012modeling}.
To study the variational modes of mortality rates across countries over years, we applied the proposed LFPCA method to
period life tables for 27 countries, with rates of mortality at age 60 available for each of the
calendar years from 1960 to 2006. The data were obtained from the Human
Mortality Database (downloaded on
March 1, 2011), maintained by University of California,
Berkeley (USA), and Max Planck Institute for Demographic Research (Germany).
The data is available at \url{www.mortality.org} or \url{www.humanmortality.de}, with detailed description in \cite{wilmoth2007methods}.

Let $X_i(t)$ denote the
mortality rate in the $i$th country for subjects at age 60 during calendar
year $t$, where $1960\leq t\leq 2006$. We directly compute the sample covariance
matrix $S$ from the observed data and apply the proposed algorithm to solve problem (\ref{opt2}).
The $\hat \rho_1$ chosen by 5-fold cross-validation (\ref{hatrho1}) is always used to ensure a relatively smooth estimate of the eigenfunction, and the solution path along different levels of localization is investigated. The 5-fold cross-validation method gave $\hat \rho_2 = 0$, indicating the eigenfunctions are not exactly localized. The estimated eigenfunctions without localization penalty are given in the top row of Figure \ref{Fig:mortality}. We then choose $\hat\rho_2$ by the second method as defined in \Cref{hatrho2_2} with $a = 30\%$. The estimated localized basis functions, as visualized in the bottom row of Figure \ref{Fig:mortality}, reveal several
interesting features. The first localized basis function
$\phi_1(t)$, explaining 54\% of the total variance, indicates that a big variation
of the mortality functions $X_i(t)$ around their mean function happens around mid 1990s. The second basis function
$\phi_2(t)$ with a mode around 1980s accounts for 21.2\% of the total variation. The third
basis function $\phi_3(t)$ characterizes variation of mortality around 1960s. Although the first localized basis function explains less variance than the first leading eigenfunction, only retaining 70\% of the capability in return of localization, the lost proportion is picked up by the second component. The second component could have explained 30.3\% of the variance without localization. Therefore, we only need three localized eigenfunctions to account for more than 85\% of the total variance. 

\subsection{Application to Berkeley Growth Data}

The smooth nature of growth curves has been explored in various previous statistical
analyses, including functional data analysis approaches. 
We apply the proposed LFPCA method to the Berkeley growth data \citep{tuddenham1954physical}.
These data contain height measurements for 54 girls, with 31
measurements taken between ages 1 year and 18 years.  A sample covariance matrix is computed based on equally spaced measurements at every half year from interpolated curves and then the proposed algorithm is used to solve problem (\ref{opt2}).
The solution path along different levels of localization is investigated.
The estimated eigenfunctions without localization penalty are visualized in the top row of Figure \ref{Fig:growth}, and the estimated localized eigenfunctions are given in the bottom row of Figure \ref{Fig:growth}. 
The localization level is chosen to maintain $rFVE = 70\%$ in \eqref{rFVE}. The total number of components $k = 2$ is chosen to explain total variation of 85\%. 
The first estimated localized basis function, explaining 70.1\% of the variation, indicates a variational mode in girls' growth around age twelve, which obviously matches the well known pubertal growth spurt. The second estimated localized basis function, explaining 18.1\% of the variation, is localized around ages five and six, which remarkably matches the mid-growth spurt previously studied by many researchers \citep{gasser1985analysis,sheehy1999analysis}. The mid-growth spurt is a growth phenomenon during early childhood, expressed by a mild transitory acceleration of growth velocity between years five and eight.  The individual variations in timings, durations and intensities of mid-growth spurt are of great interest and some hypotheses have been proposed for the explanation of individual differences \citep{muhl1991mid}.
This particular ``mode of variation" is not obvious in standard FPCA. The proposed LFPCA method finds a balance between interpretability (localization) and amounts of variance explained.

\begin{figure}[t]
\centerline{
\includegraphics[scale=0.4]{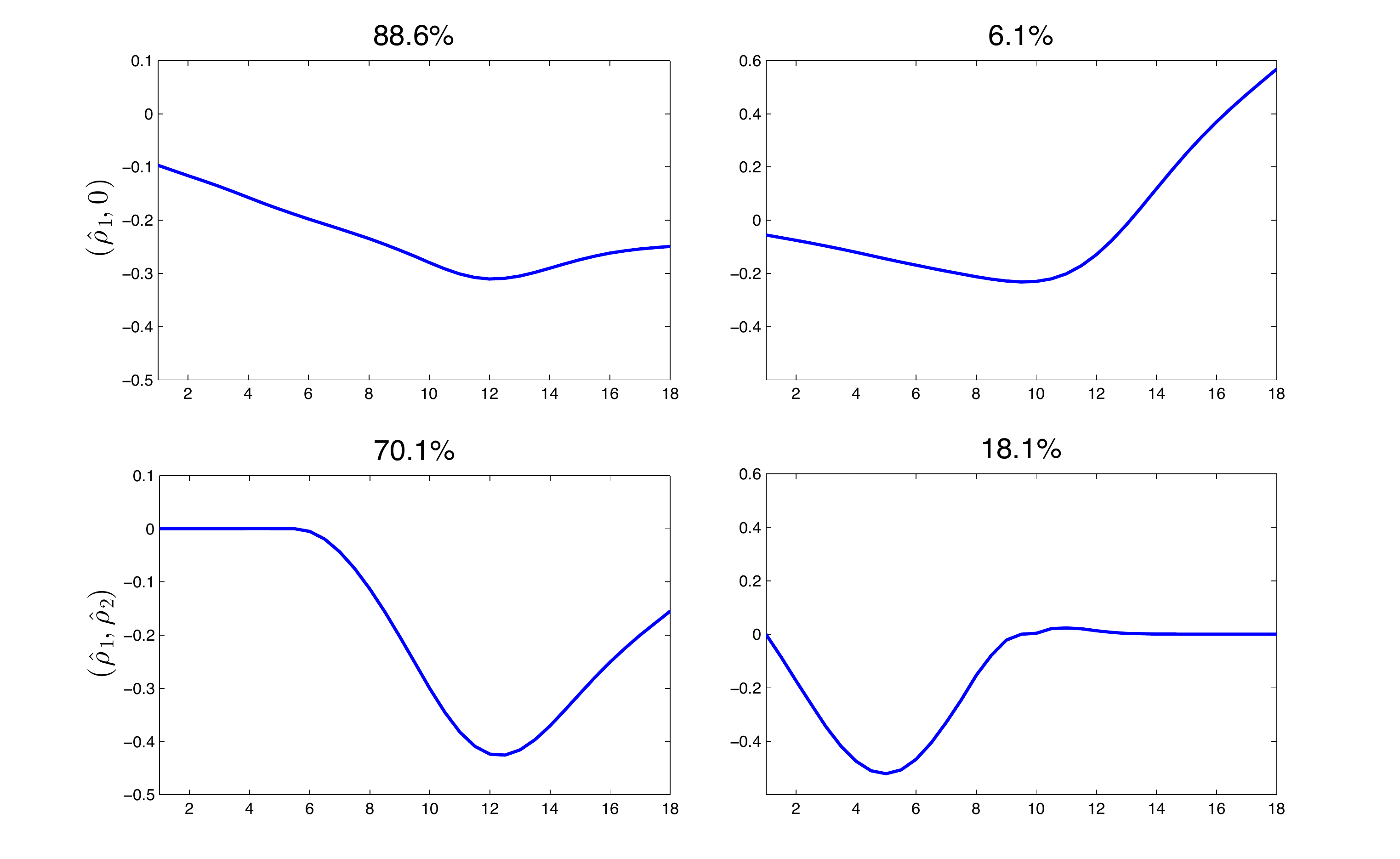}
}
\caption{Top Row: Estimated eigenfunctions for the growth data, $\hat\rho_1$ is chosen by 5-fold cross validation; Bottom Row: Estimated orthogonal basis functions, $\hat\rho_2$ is chosen to maintain rFVE at 70\%, and the number of components $k = 2$ is chosen to explain at least 85\% of the total variance.}
\label{Fig:growth}
\end{figure}

\section{Discussion}\label{sec:discussion}
In this paper, we propose a localized functional principal component analysis through a Deflated Fantope Localization method, where sequentially obtained eigenfunctions have guaranteed orthogonality and are allowed to be supported on different localized subdomains. As mentioned in \Cref{sec:formulation}, the deflated Fantope $\D{\Pi}$ can easily be generalized to a $d$-dimensional version $\mathcal{D}_{\Pi}^{d}$. In some applications, one might be interested to  estimate mutually orthogonal principal subspaces with dimensions $d_1, \dots, d_k$, and each principal subspace is only supported on a subdomain $\mathcal{T}_j\subset \mathcal T$.

Throughout the paper, we mainly focus on a dense and regular functional data design where $p$ equally spaced observations are recorded on each curve. Most commonly seen functional data have this design and a sample covariance can be easily computed from the discrete and possibly noisy observations. Our proposed formula (\ref{opt2}) takes the sample covariance $S$ and outputs smooth and localized estimates of eigenfunctions.
In fact the proposed method puts rather minimal requirement on the input matrix $S$ and does not rely on the design of observations. Consider the discretized version of $\PopOperator$ by evaluating on $p\times p$ equally spaced grid points. The estimation error of the eigenfunctions directly ties to the maximum entry-wise error of the input matrix $S$.
Here we briefly discuss several scenarios where a sample covariance is not feasible. (i) For dense but irregularly observed functional data, one can simply smooth each curve \citep{ramsay2005functional} or interpolate between points to get $p$ equally spaced observations, and then a $p\times p$ covariance estimate $S$ can be computed. This is what we have done for the Berkeley growth data. (ii) For sparse functional data where the observations are recorded at random and sparse time points, individual smoothing or interpolation is impossible. But a uniformly consistent covariance estimation is possible by, for example, two-dimensional smoothing methods \citep{yao2005functional,li2010uniform}. Our proposed LFPCA can then be applied by taking $S$ as the discretized version of the smooth covariance estimator. (iii) For ultra dense and noisy data, the independent measurement errors accumulate if one uses sample covariance computed from the raw measurements.  Moreover, using a large $p\times p$ matrix is not computationally efficient. We recommend  performing pre-smoothing or pre-binning on individual curves and choosing a grid with a moderate size $p$.

We proposed two methods of choosing the localization parameter $\rho_2$. When the cross-validation method chooses $\rho_2 =0$, it roughly means that the true eigenfunctions are not localized. Then for a fixed number of $a$, we find a set of orthogonal basis functions that retain the ability to explain a fair amount of variance (at least a $(1-a)$ proportion) and are localized. In this case, the outcome would depend on the choice of $a$ and they should not be interpreted as the true eigenfunctions. Rather, these localized basis functions and the corresponding projection scores have ready interpretation with domain knowledge.

\section{Appendix}

\begin{proof}[\bf Proof of \Cref{lem:char-dfantope}]
Because $\D{\Pi}$ is a compact set, we know that
 $B=\Proj_{\D{\Pi}}(A)$ exists and is unique.
 Let $G = U^T B U$,  then we have $G\in \Fantope{1}$ and $B=U G U^T$.
Note that $G$ minimizes $\norm{A-UGU^T}_F^2$ over $\Fantope{1}$ and
  \begin{align*}
    \norm{A-UGU^T}_F^2 = & \norm{A}_F^2 - 2\innerp{A}{UGU^T} + \norm{UGU^T}_F^2\\
     = & \norm{A}_F^2 - 2\innerp{U^TA U}{G} + \norm{G}_F^2\\
     = & \norm{A}_F^2 - \norm{U^T A U}_F^2 +\norm{U^T A U-G}_F^2\,.
  \end{align*}
  Therefore, $G$ is the projection of $U^T A U$ onto $\Fantope{1}$ and
  by Lemma 4.1 of \cite{vu2013fantope} we have
  $$
  G = \sum_{i=1}^{p-d} \gamma_i^+(\theta) \eta_i \eta_i^T
  $$
  with $\gamma_i$, $\eta_i$, and $\theta$ specified in the theorem.
\end{proof}

\begin{proof}[\bf Proof of \Cref{thm:convergence}]
For any given $\step>0$, define the augmented Lagrangian of \eqref{eq:H_i-equiv} as
$$
L_{\step}(\PrimalVariable, \DualVariable, Y) =\ConvexIndicator{\D{\Pi}}(\PrimalVariable)-\innerp{S}{\PrimalVariable} + \SparsePenalty \norm{\DualVariable}_{1,1} +\innerp{Y}{\PrimalVariable - \DualVariable}
+\frac{\step}{2}\norm{\PrimalVariable-\DualVariable}_F^2\,.
$$
The update steps in \Cref{alg:admm} now reads, letting $W^{(r)}= \step Y^{(r)}$,
\begin{align*}
  \PrimalVariable^{(r)} & = \arg\min_{\PrimalVariable} L_\step(\PrimalVariable,\DualVariable^{(r-1)},W^{(r-1)})\,,\\
  \DualVariable^{(r)} & = \arg\min_{\DualVariable} L_\step(\PrimalVariable^{(r)},
  \DualVariable, W^{(r)})\,,\\
  Y^{(r)} & = Y^{(r-1)} + \step(\PrimalVariable-\DualVariable)\,.
\end{align*}
It is obvious that $ \ConvexIndicator{\D{\Pi}}(\PrimalVariable)-\innerp{S}{\PrimalVariable}$
and $\SparsePenalty \norm{\DualVariable}_{1,1}$ are closed, proper, and convex functions.
Here we say a function $f$ is closed, proper and convex if $\{(x,t):f(x)\le t\}$ is a closed non-empty convex set (\cite{Boyd-ADMM}, Section 3.2).

By strong duality, we can find a primal-dual pair of $L_0(\PrimalVariable, \DualVariable, Y)$, denoted as $(\PrimalVariable^{**},\DualVariable^{**}, Y^{**})$.
It then follows from the primal and dual optimality that
$(\PrimalVariable^{**},\DualVariable^{**}, Y^{**})$ is a saddle point of $L_0$ and hence
by Section 3.2.1 of \cite{Boyd-ADMM}, we have
$$
\DualVariable^{(r)}\rightarrow \DualVariable^{*}~~\text{and}~~ \PrimalVariable^{(r)}-\DualVariable^{(r)}\rightarrow 0,~~\text{as}~~t\rightarrow\infty\,,
$$
where $(\PrimalVariable^{*},\DualVariable^{*})$ is an optimal primal
variable for $L_0$.
\end{proof}

\noindent{\bf Proof of consistency result.}
To prove \Cref{thm:main-l2}, we need some additional lemmas and notation as follows. The proof of lemmas are given after the proof of \Cref{thm:main-l2}.

Let $I_{j}=((j-1)/p,j/p]$ for $j=2,...,p$ and $I_{1}=[0,1/p]$. We define $\phi_{j}^*(t)=\phi_j(t_{i})$ for $t\in I_i$, $u_{j}^*=p^{-1/2}(\phi_j(t_{1}),\phi_j(t_{2}),...,\phi_j(t_{p}))^T$,
and $u_j = u_j^*/\norm{u_j^*}_2$.  Let $\tilde\PopOperator : [0,1]^2\mapsto [0,\infty)$ be such that
$\tilde\PopOperator(s,t)= \PopOperator(t_{i},t_{j}), \text{if}~s\in I_i,t\in I_j\,.$
Define the discretized and diagonal-shifted covariance matrix $\PopMatrix$ by $\PopMatrix(l,l')=\PopOperator(t_{l},t_{l'})+a\mathbf 1(l=l')\,.$
Let $\tilde\phi_{j}$ be eigenfunctions of $\tilde\PopOperator$ and $v_{j}$ be eigenvectors of $\PopMatrix$.
Then $p^{-1/2}\tilde\phi_{j}(t)$ is the $i$th entry of $v_{j}$ if $t\in I_i$. If we further denote the $j$th eigenvalue of $\tilde\PopOperator$
by $\tilde\lambda_{j}$, then $(p\tilde\lambda_{j}+a,v_{j})$ is an eigenvalue-eigenvector pair
of $\PopMatrix$.

Let $\PopProjection_j=\sum_{i=1}^j v_i v_i^T$, and $\PopMatrix_j=\PopMatrix - \PopProjection_{j-1} \PopMatrix \PopProjection_{j-1}$.  Define $\PopProjection_0=0$ for convenience. 
For any measurable $B:[0,1]^2\mapsto \Real$, let $\norm{B}_{\rm HS}
=[\int_{[0,1]^2} B(s,t)^2 ds dt]^{1/2}$ be the Hilbert-Schmidt norm.

\begin{lemma}\label{lem:u-v}
  Under Assumptions (A1-A4), let $c_0=L(2/\Gap+1)$ we have
  for $p$ large enough, we have
  $\norm{v_{j}-u_{j}}_2\le c_0 p^{-1}\,,~~\text{for}~1\le j\le \ndim$.
\end{lemma}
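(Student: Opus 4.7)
\textbf{Proof plan for \Cref{lem:u-v}.}

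The plan is to transfer the discrete comparison $\norm{v_j - u_j}_2$ into a continuous $L^2$ comparison of eigenfunctions via the natural $L^2$-isometry $\Real^p \to L^2([0,1])$ sending $w \mapsto \sqrt{p}\sum_i w_i \mathbf{1}_{I_i}$. Under this isometry $u_j^*$ maps to $\phi_j^*$ and $v_j$ maps to $\tilde\phi_j$, so $\norm{v_j - u_j^*}_2 = \norm{\tilde\phi_j - \phi_j^*}_{L^2}$; this identity lets me control the discrete error by bounding both $\tilde\phi_j$ and $\phi_j^*$ against the true eigenfunction $\phi_j$.

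First I would bound the operator perturbation. Assumption A3 gives $|\PopOperator(s,t) - \PopOperator(t_i, t_{i'})| \le L/p$ on each cell $I_i \times I_{i'}$, so
$$\norm{\PopOperator - \tilde\PopOperator}_{\mathrm{op}} \;\le\; \norm{\PopOperator - \tilde\PopOperator}_{\mathrm{HS}} \;\le\; L/p.$$
By A2 and Weyl's inequality, the eigen-gap of $\tilde\PopOperator$ at index $j \le \ndim$ is at least $\Gap - 2L/p$, which is positive for $p$ large. The Davis--Kahan $\sin\Theta$ theorem (Bhatia 1997, Theorem VII.3.1), followed by the standard conversion $\norm{\phi_j - \tilde\phi_j}_{L^2} \le \sqrt{2}\,\sin\Theta$ after sign alignment, then gives
$$\norm{\phi_j - \tilde\phi_j}_{L^2} \;\le\; \frac{\sqrt{2}\,\norm{\PopOperator - \tilde\PopOperator}_{\mathrm{op}}}{\Gap - 2L/p} \;\le\; \frac{2L}{p\Gap}$$
for $p$ large enough.

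Next I would control the deterministic discretization error $\norm{\phi_j - \phi_j^*}_{L^2}$. Differentiating the eigen-equation $\lambda_j \phi_j(s) = \int \PopOperator(s,t)\phi_j(t)\,dt$ together with A3, and using $\lambda_j \ge \lambda_\ndim > 0$, shows that $\phi_j$ is Lipschitz with constant of order $L$ (this is where A4 enters to give a clean $L$-Lipschitz bound uniformly in $1\le j\le \ndim$); hence $\norm{\phi_j - \phi_j^*}_{L^2} \le L/p$. The triangle inequality and the isometry above then yield
$$\norm{v_j - u_j^*}_2 \;=\; \norm{\tilde\phi_j - \phi_j^*}_{L^2} \;\le\; \norm{\tilde\phi_j - \phi_j}_{L^2} + \norm{\phi_j - \phi_j^*}_{L^2} \;\le\; \frac{L}{p}\Bigl(\frac{2}{\Gap} + 1\Bigr) \;=\; \frac{c_0}{p}.$$
Finally, since $\norm{v_j}_2 = 1$, the reverse triangle inequality gives $|\norm{u_j^*}_2 - 1| \le \norm{v_j - u_j^*}_2 = O(1/p)$, so renormalization contributes only a lower-order term and $\norm{v_j - u_j}_2 \le c_0/p$ holds for $p$ large enough.

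The main technical obstacle is bookkeeping constants consistently through these steps so that the final bound really reads $c_0 = L(2/\Gap + 1)$: the Davis--Kahan $\sqrt{2}$ prefactor and the $\Gap - 2L/p$ denominator need to be absorbed into the ``$p$ large enough'' proviso, and the Lipschitz constant of $\phi_j$ must be shown to be controlled by the single constant $L$ (uniformly in $j\le \ndim$) by combining A3, A4, and the eigen-equation. A secondary subtlety is the sign ambiguity in Davis--Kahan, resolved by replacing $v_j$ with $\pm v_j$ to align with $u_j$; this is harmless for the downstream proofs of \Cref{thm:main-l2} and \Cref{thm:rate}, which depend only on the rank-one projector $v_j v_j^T$.
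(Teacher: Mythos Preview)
Your approach mirrors the paper's exactly: pass to $L^2$ via the step-function isometry, bound $\norm{\tilde\phi_j-\phi_j}_2$ by Davis--Kahan applied to $\norm{\tilde\PopOperator-\PopOperator}_{\rm HS}$, bound $\norm{\phi_j^*-\phi_j}_2$ from the smoothness of $\phi_j$, and then handle the renormalization $u_j^*\to u_j$ separately. The one slip is in your last step: renormalization is \emph{not} lower order, since $\norm{u_j-u_j^*}_2=\bigl|\,\norm{u_j^*}_2-1\,\bigr|$ is itself $O(1/p)$, so your bounds as stated sum to $2c_0/p$ rather than $c_0/p$. The paper recovers the exact constant by taking the sharper discretization bound $\norm{\phi_j^*-\phi_j}_\infty\le L/(2p)$, which gives $\norm{v_j-u_j^*}_2\le (2/\Gap+1/2)L/p$, and then separately $\norm{u_j-u_j^*}_2=\bigl|\,\norm{\phi_j^*}_2-1\,\bigr|\le\norm{\phi_j^*-\phi_j}_2\le L/(2p)$; the two pieces then sum to exactly $c_0/p$.
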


\begin{lemma}\label{lem:popmatrix}
Under Assumptions (A1-A3), when $p$ is large enough we have $\norm{\PopMatrix}_{F}^2 \le c_2^2 p^2$, where $c_2^2=a^2+4\norm{\PopOperator}_{\rm HS}^2+L^2$.
Moreover, the gap between the $j$th and $(j+1)$th eigenvalues of
  $\PopMatrix$ is at least $p \Gap/2$ for all $1\le j\le \ndim$.
\end{lemma}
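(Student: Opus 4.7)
My plan is to decompose $\PopMatrix = \tilde\Sigma + aI$, where $\tilde\Sigma(l,l') = \PopOperator(t_l,t_{l'})$, and to handle both the Frobenius bound and the eigengap by comparing the piecewise-constant integral operator $\tilde\PopOperator$ to the continuous operator $\PopOperator$ using the Lipschitz hypothesis (A3). The central link is the identity $\norm{\tilde\Sigma}_F^2 = \sum_{l,l'}\PopOperator(t_l,t_{l'})^2 = p^2\norm{\tilde\PopOperator}_{\rm HS}^2$, which converts the matrix-level question into an operator-level comparison.

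For the Frobenius norm bound, (A3) gives $|\tilde\PopOperator(s,t)-\PopOperator(s,t)|\le L/p$ uniformly on $[0,1]^2$, so $\norm{\tilde\PopOperator-\PopOperator}_{\rm HS}\le L/p$, and by the triangle inequality on HS norms, $\norm{\tilde\Sigma}_F \le p\norm{\PopOperator}_{\rm HS}+L$. Using $\norm{\PopMatrix}_F \le \norm{\tilde\Sigma}_F + a\sqrt{p}$, squaring, and bounding the cross and lower-order terms ($2Lp\norm{\PopOperator}_{\rm HS}$, $L^2$, $a^2 p$, and $2a\sqrt p(p\norm{\PopOperator}_{\rm HS}+L)$), these are all dominated for large $p$ by $3p^2\norm{\PopOperator}_{\rm HS}^2 + a^2(p^2-p) + L^2(p^2-1)$. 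This yields $\norm{\PopMatrix}_F^2 \le c_2^2 p^2$ for $p$ large enough; the generous coefficient $4$ in $c_2^2$ is precisely this bookkeeping slack.

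For the eigengap, the setup already identifies $p\tilde\lambda_j + a$ as the $j$th eigenvalue of $\PopMatrix$, so the gap between consecutive eigenvalues of $\PopMatrix$ equals $p(\tilde\lambda_j - \tilde\lambda_{j+1})$, and it suffices to show $\tilde\lambda_j - \tilde\lambda_{j+1}\ge\Gap/2$ for $p$ large. Weyl's inequality applied to the compact self-adjoint integral operators $\tilde\PopOperator$ and $\PopOperator$ gives $|\tilde\lambda_j-\lambda_j|\le \norm{\tilde\PopOperator-\PopOperator}_{\rm op}\le\norm{\tilde\PopOperator-\PopOperator}_{\rm HS}\le L/p$, so by (A2), $\tilde\lambda_j - \tilde\lambda_{j+1}\ge \Gap - 2L/p \ge \Gap/2$ whenever $p\ge 4L/\Gap$.

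There is no real conceptual obstacle; both parts reduce to a Lipschitz/Riemann-sum estimate combined with standard spectral perturbation. The one place requiring care is keeping straight the $p$-rescaling between the integral-operator spectrum $\{\tilde\lambda_j\}$ and the matrix spectrum $\{p\tilde\lambda_j+a\}$, but this is purely bookkeeping once the piecewise-constant representation is fixed.
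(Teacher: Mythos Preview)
Your proposal is correct and follows essentially the same route as the paper: both arguments pass through the identity $\norm{\tilde\Sigma}_F^2 = p^2\norm{\tilde\PopOperator}_{\rm HS}^2$, bound $\norm{\tilde\PopOperator-\PopOperator}_{\rm HS}$ via the Lipschitz condition (A3), and then invoke Weyl's inequality on the integral operators to control the eigengap. The only cosmetic difference is that the paper uses the crude inequality $(x+y)^2\le 2x^2+2y^2$ twice (once for $aI_p$ vs.\ $\tilde\Sigma$, once for $\tilde\PopOperator$ vs.\ $\PopOperator$) to land directly on the constant $c_2^2$, whereas you take the triangle inequality and then absorb the cross terms for large $p$; both reach the same conclusion.
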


The following lemma is elementary and can be found in  \cite{vu2013minimax}.
\begin{lemma}\label{lem:simple}
  Let $u$ and $v$ be vectors of same length with unit norm.  Then
  \begin{align*}
    \frac{1}{\sqrt{2}}\norm{u-v}_2\le \norm{uu^T - vv^T}_F\le \sqrt{2}\norm{u-v}_2\,.
  \end{align*}
  The same holds when $u$, $v$ are functions and $\norm{\cdot}_F$ is replaced by $\norm{\cdot}_{\rm HS}$.
\end{lemma}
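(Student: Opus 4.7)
The plan is a single direct computation: I would expand $\norm{uu^T - vv^T}_F^2$ using the trace characterization, use unit norm, and then factor the resulting expression so that $\norm{u-v}_2^2$ appears explicitly. Specifically, since $\norm{A}_F^2 = \trace(A^T A)$ and $uu^T$, $vv^T$ are symmetric,
\begin{align*}
\norm{uu^T - vv^T}_F^2 &= \trace(uu^T uu^T) - 2\trace(uu^T vv^T) + \trace(vv^T vv^T)\\
&= \norm{u}_2^4 - 2(u^T v)^2 + \norm{v}_2^4 = 2\bigl(1 - (u^T v)^2\bigr).
\end{align*}
In parallel, $\norm{u-v}_2^2 = 2(1 - u^T v)$, so we obtain the key identity
\begin{equation*}
\norm{uu^T - vv^T}_F^2 = (1 + u^T v)\,\norm{u-v}_2^2.
\end{equation*}

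From here, the bounds follow from pinching $1 + u^T v$. Since $vv^T = (-v)(-v)^T$, the outer product is invariant under a sign flip of $v$, so without loss of generality I may replace $v$ by $\sign(u^T v)\,v$ and assume $u^T v \ge 0$. Combined with Cauchy–Schwarz and unit norm, this gives $0 \le u^T v \le 1$ and hence $1 \le 1 + u^T v \le 2$. Plugging into the identity and taking square roots,
\begin{equation*}
\tfrac{1}{\sqrt 2}\norm{u-v}_2 \le \norm{u-v}_2 \le \norm{uu^T - vv^T}_F \le \sqrt 2\,\norm{u-v}_2,
\end{equation*}
which is exactly the claim. (I would note that the sign normalization is the standard device used whenever one compares individual eigenvectors; elsewhere in the consistency proof the paper picks the sign of $\hat v_j$ so that $\hat v_j^T u_j \ge 0$, which is precisely what makes the lemma applicable.)

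For the functional version, the only thing to check is that the same algebra is valid, since the calculation uses nothing beyond linearity of the inner product and unit norm. Writing $\langle u, v\rangle = \int_0^1 u(t) v(t)\,dt$, the rank-one kernel $K_u(s,t) = u(s)u(t)$ satisfies $\norm{K_u - K_v}_{\rm HS}^2 = \iint (u(s)u(t) - v(s)v(t))^2\,ds\,dt = 2(1 - \langle u,v\rangle^2)$ by the same expansion, and $\norm{u-v}_{L^2}^2 = 2(1 - \langle u,v\rangle)$, yielding the identical factorization and inequalities.

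There is no serious obstacle here: the lemma is an elementary identity plus a two-line bounding argument. The only conceptual subtlety worth flagging in the writeup is the sign convention on $v$; without it the lower bound fails trivially at $v = -u$, so I would make that normalization explicit at the start of the proof.
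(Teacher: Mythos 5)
Your proof is correct, and in fact the paper offers no argument of its own here: it simply labels the lemma elementary and points to \cite{vu2013minimax}, so your direct computation fills a gap rather than paralleling an existing proof. The expansion $\norm{uu^T-vv^T}_F^2=2\bigl(1-(u^Tv)^2\bigr)=(1+u^Tv)\norm{u-v}_2^2$ is exactly the standard route, and your pinching of $1+u^Tv$ gives both bounds (indeed, under your normalization the lower constant improves from $1/\sqrt2$ to $1$; the unconditional part of the statement is only the upper bound, since $1+u^Tv\le 2$ always, whereas the lower bound needs $u^Tv\ge -1/2$). Your flag about the sign convention is a genuine and worthwhile observation: as literally stated the lower bound fails at $v=-u$, and note that replacing $v$ by $\sign(u^Tv)\,v$ is not a true ``without loss of generality'' for the displayed inequality, because it changes $\norm{u-v}_2$; what you are really proving is the lemma under the convention $u^Tv\ge 0$. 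That is exactly how the paper uses it --- eigenvectors and eigenfunctions such as $v_j$, $\tilde\phi_j$ are only defined up to sign, and the consistency argument implicitly fixes signs so that the relevant inner products are nonnegative --- so stating that convention explicitly, as you propose, makes the lemma both correct and directly applicable. The $L^2$/Hilbert--Schmidt version goes through verbatim as you say, since the computation uses only bilinearity of the inner product and unit norms.
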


The next lemma characterizes $u_i$ as an approximate leading eigenvector of $\PopMatrix_i$.  It extends Lemma 4.2
of \cite{vu2013minimax}.
\begin{lemma}[Approximate curvature]\label{lem:approx-curvature}
  Let $H_j$ be a solution to \eqref{eq:H_i}.
  Then under Assumptions (A2-A4), for $p$ large enough,
  \begin{align*}
  \frac{p\Gap}{8}\norm{\PrimalVariable_j - u_j u_j^T}_F^2 -\frac{3c_0 c_2}{2} \le & \innerp{-\PopMatrix_j}{\PrimalVariable_j- u_j u_j^T}\,,~~\forall~~1\le j\le \ndim\,.\end{align*}
\end{lemma}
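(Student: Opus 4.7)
The plan is to reduce the claim to the standard Fantope curvature bound at the true leading eigenvector $v_j$ of the deflated population matrix $\PopMatrix_j$, and then transfer to the ``discretized eigenvector'' $u_j$ through a triangle inequality, using \Cref{lem:popmatrix} to supply the spectral gap and Frobenius norm estimates and \Cref{lem:u-v,lem:simple} to control the discretization slack $\norm{v_j v_j^T - u_j u_j^T}_F$. Optimality of $H_j$ in \eqref{eq:H_i} is not actually used; only $H_j\in\D{\EstProjection_{j-1}}\subset\Fantope{1}$ is.

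First I would observe that since $\PopMatrix_j=\sum_{i\ge j}\mu_i v_i v_i^T$ with $\mu_i=p\tilde\lambda_i+a$, its top two eigenvalues are $\mu_j\ge\mu_{j+1}$, whose gap is $\ge p\Gap/2$ by \Cref{lem:popmatrix}. Because $H_j\in\Fantope{1}$, the Fantope curvature lemma (Proposition~3.1 of \cite{vu2013fantope}) applied to $\PopMatrix_j$ at its unique maximizer $v_j v_j^T$ gives
\begin{equation*}
\innerp{-\PopMatrix_j}{H_j - v_j v_j^T}\;\ge\;\frac{\mu_j-\mu_{j+1}}{2}\norm{H_j - v_j v_j^T}_F^2\;\ge\;\frac{p\Gap}{4}\norm{H_j-v_j v_j^T}_F^2.
\end{equation*}

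Next I would split $\innerp{-\PopMatrix_j}{H_j-u_j u_j^T}=\innerp{-\PopMatrix_j}{H_j-v_j v_j^T}+\innerp{-\PopMatrix_j}{v_j v_j^T-u_j u_j^T}$ and treat the two pieces. Chaining \Cref{lem:u-v} with \Cref{lem:simple} yields $\norm{v_j v_j^T - u_j u_j^T}_F\le \sqrt{2}\,c_0/p$, while $\norm{\PopMatrix_j}_F\le c_2 p$ since $\PopMatrix_j$ only retains tail eigenvalues of $\PopMatrix$; Cauchy--Schwarz then bounds the cross term by $\sqrt{2}\,c_0 c_2$. For the main term, applying the elementary bound $(a-b)^2\ge a^2/2-b^2$ with $a=\norm{H_j-u_j u_j^T}_F$, $b=\norm{v_j v_j^T-u_j u_j^T}_F$, and $\norm{H_j-v_j v_j^T}_F\ge |a-b|$ gives
\begin{equation*}
\norm{H_j-v_j v_j^T}_F^2\;\ge\;\tfrac{1}{2}\norm{H_j-u_j u_j^T}_F^2-2c_0^2/p^2.
\end{equation*}
Combining everything produces
\begin{equation*}
\innerp{-\PopMatrix_j}{H_j-u_j u_j^T}\;\ge\;\frac{p\Gap}{8}\norm{H_j-u_j u_j^T}_F^2-\sqrt{2}\,c_0 c_2-\frac{\Gap c_0^2}{2p},
\end{equation*}
and since $3/2-\sqrt{2}>0$, the vanishing residual $\Gap c_0^2/(2p)$ can be absorbed into the constant gap $(3/2-\sqrt{2})c_0 c_2$ once $p$ is large enough, yielding the stated $3c_0 c_2/2$ slack.

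The main obstacle is really bookkeeping rather than conceptual: one must verify that Vu et al.'s Fantope curvature applies despite the extra deflation constraint $\EstProjection_{j-1}$ (it does, since the bound only uses $H_j\in\Fantope{1}$) and that the asymmetric combination of an $O(p\Gap)$ curvature with a $O(1/p)$ discretization error collapses cleanly in the Frobenius geometry via the $(a-b)^2\ge a^2/2-b^2$ trick; the constants then fall out because $\sqrt{2}<3/2$.
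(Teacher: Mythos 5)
Your proof is correct and follows essentially the same route as the paper's: both invoke the Fantope curvature lemma at $v_jv_j^T$ for the deflated matrix $\PopMatrix_j$ (with eigengap $\ge p\Gap/2$ from \Cref{lem:popmatrix}), bound the cross term $\innerp{-\PopMatrix_j}{v_jv_j^T - u_ju_j^T}$ by $\sqrt{2}\,c_0c_2$ via Cauchy--Schwarz and \Cref{lem:u-v,lem:simple}, and absorb the $O(p^{-1})$ residual into the constant using $\sqrt{2} < 3/2$. The only cosmetic difference is that you lower-bound $\innerp{-\PopMatrix_j}{H_j - u_ju_j^T}$ directly via the reverse-triangle inequality and $(a-b)^2 \ge a^2/2 - b^2$, whereas the paper upper-bounds $\norm{H_j - u_ju_j^T}_F^2$ via $(a+b)^2 \le 2a^2 + 2b^2$ and rearranges; these are the same estimate.
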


\begin{proof}[\bf Proof of \Cref{thm:main-l2}]
  The claim follows if we show that $\sup_{1\le j\le \ndim}\norm{\hat v_j - u_j}_2=o_p(1)$.

For simplicity denote $e_n\coloneqq\norm{\InputMatrix-\PopMatrix}_{\infty,\infty}$, which is $o_p(1)$
by assumption (A1).
Let $\PopProjection_0=\EstProjection_0=0$, $\hat v_0=0$, $\beta_0=\epsilon_0=0$.
We use induction to show that
 there exist $(\epsilon_i,\beta_i:0\le i\le \ndim)$ such that
\begin{align}
  &\sup_{0\le i\le \ndim} \epsilon_i=o_p(1),~~\sup_{0\le i\le \ndim}\beta_i=o_p(p)\,,\\
  &\max\{\norm{\hat v_i - v_i}_2,\norm{\EstProjection_i-\PopProjection_i}_2\}\le \epsilon_i
  \label{eq:induct-l2}\\
  &\SmoothPenalty \innerp{\DiffOperator}{\hat v_i \hat v_i^T}\le \beta_i\,.
  \label{eq:induct-smooth}
\end{align}

Obviously the claim holds for  $\ndim=0$.  Now assume that
the claim holds for $\ndim=j-1$, and $j\ge 1$.  We will construct
$\epsilon_j=o_p(1)$ and $\beta_j=o_p(p)$ satisfying \Cref{eq:induct-l2,eq:induct-smooth}.

Let  $\PopMatrix_j = \PopMatrix - \PopProjection_{j-1}\PopMatrix\PopProjection_{j-1}$ for $j=1,...,k$.
By \Cref{lem:approx-curvature} we have, for $p$ large enough,
\begin{align}\label{eq:3-1}
\frac{p\Gap}{8}\norm{\PrimalVariable_j - u_j u_j^T}_F^2 -\frac{3c_0c_2}{2} \le & \innerp{-\PopMatrix_j}{\PrimalVariable_j- u_j u_j^T}\,.\end{align}

Next we need to control
$\innerp{\PopMatrix-\PopMatrix_j}{\PrimalVariable_j-u_j u_j^T}$
and $\innerp{\InputMatrix}{\PrimalVariable_j-u_j u_j^T}$,
where the first one is small because $H_j$ and $u_j u_j^T$ are nearly
orthogonal to $\PopMatrix-\PopMatrix_j$, and the second term
needs to be controlled by the fact that $H_j$ is a maximizer of
\eqref{eq:H_i}.

For the first term $\innerp{\PopMatrix-\PopMatrix_j}{\PrimalVariable_j-u_j u_j^T}$, by the orthogonality constraint, we have
\begin{align}
  \innerp{\PopMatrix-\PopMatrix_j}{\PrimalVariable_j}
  \le & \lambda_1 \innerp{\PopProjection_{j-1}}{\PrimalVariable_j}
  = \lambda_1|\innerp{\PopProjection_{j-1}-\EstProjection_{j-1}}{\PrimalVariable_j}|
  \le \lambda_{1}\epsilon_{j-1}\le c_2 p\epsilon_{j-1}\,.\nonumber
\end{align}
and similarly
\begin{align}
  \innerp{\PopMatrix-\PopMatrix_j}{u_j u_j^T}=&\innerp{\PopMatrix-\PopMatrix_j}{u_j u_j^T-v_j v_j^T}
  \le  \norm{\PopMatrix_{j-1}}_F\norm{u_j u_j^T - v_j v_j^T}_F\le \sqrt{2}c_2 c_0\,,\nonumber
\end{align}
where the last inequality follows from \Cref{lem:u-v} and \Cref{lem:simple},
and therefore
\begin{align}
  \left|\innerp{\PopMatrix-\PopMatrix_j}{\PrimalVariable_j-u_j u_j^T}\right|\le c_2 p(\epsilon_{j-1}+\sqrt{2}c_0p^{-1})\label{eq:3-3}\,.
\end{align}

Now we turn to the term $\innerp{\InputMatrix}{\PrimalVariable_j-u_j u_j^T}$.
 If we can show that
\begin{align}
 0 \le & \innerp{\InputMatrix}{\PrimalVariable_j-u_j u_j^T}
    - \SmoothPenalty\innerp{\DiffOperator}{\PrimalVariable_j}+
  R_j
  \,,\label{eq:alpha=0-optimality-1}
\end{align}
for some $R_j=o_p(p)$
then we have, combining \cref{eq:3-1,eq:3-3,eq:alpha=0-optimality-1},
\begin{align}
\frac{p\Gap}{8}\norm{H_j-u_ju_j^2}_F^2\le &
\innerp{\InputMatrix-\PopMatrix}{H_j-u_ju_j^T} -\SmoothPenalty\innerp{\DiffOperator}{H_j}
+ R_j'\nonumber\\
\le &
e_n p \norm{H_j - u_j u_j^T}_F-\SmoothPenalty\innerp{\DiffOperator}{H_j}
 + R_j'\,,\label{eq:alpha=0-quadratic}
\end{align}
where
$
R_j'=c_2 p(\epsilon_{j-1}+\sqrt{2}c_0p^{-1})+R_j+3c_0c_2/2\,.$
It follows that
\begin{align}
  \norm{H_j-u_j u_j^T}_F\le \frac{8 e_n }{\Gap}+\sqrt{\frac{8R_j'}{\Gap p}}\,.
\end{align}
Since $\hat v_j \hat v_j^T$ is the closest rank one, unit norm matrix to
$H_j$, we have
\begin{align*}
  &\norm{\hat v_j \hat v_j^T-v_j v_j^T}_F \le
  \norm{\hat v_j \hat v_j^T - u_j u_j^T}_F + \norm{u_j u_j^T - v_j v_j^T}_F\\
  \le & 2\norm{H_j - u_j u_j^T}_F+\sqrt{2}c_0p^{-1}
  \le
  \frac{16 e_n}{\Gap}+2\sqrt{\frac{8R_j'}{\Gap p}}+\sqrt{2}c_0p^{-1}\,,
\end{align*}
and
\begin{align*}
&\norm{\EstProjection_j-\PopProjection_j}_F\le \norm{\EstProjection_{j-1}-\PopProjection_{j-1}}_F+  \norm{\hat v_j \hat v_j^T-v_j v_j^T}_F\\
\le &\epsilon_{j-1}+\frac{16 e_n }{\Gap}+2\sqrt{\frac{8R_j'}{\Gap p}}+\sqrt{2}c_0p^{-1} \eqqcolon \epsilon_{j}\,.
\end{align*}

Now it remains to find $\beta_j$.  Using  \eqref{eq:alpha=0-quadratic} we have
\begin{align*}
  \SmoothPenalty\innerp{\DiffOperator}{H_j}\le &
  e_n p \norm{H_j-u_j u_j^T}_F+R_j'
  \le  2e_n p \epsilon_j + R_j'\,.
\end{align*}
On the other hand, let $\lambda_{j,1}$ be the largest eigenvalue of $H_j$.
Then \begin{align*}
  \frac{\epsilon_j^2}{4}\ge\norm{H_j-u_j u_j^T}_F^2= \norm{H_j}_F^2 -2\innerp{H_j}{u_j u_j^T}+1
  \ge \lambda_{j,1}^2 - 2\lambda_{j,1}+1\,,
\end{align*}
where we use the fact that $\norm{H_j}_F^2 \ge \lambda_{j,1}^2$, and $\innerp{H_j}{u_j u_j^T}
\le \lambda_{j,1}\norm{u_j}_2^2$ (von Neumann trace inequality).  It then follows that
  $\lambda_{j,1}\ge 1-\epsilon_j/2$,
which implies that
\begin{align*}
   \SmoothPenalty\innerp{\DiffOperator}{\hat v_j \hat v_j^T}\le & (1-\epsilon_j/2)^{-1}\SmoothPenalty\innerp{\DiffOperator}{H_j}\le
   (1-\epsilon_{j}/2)^{-1}
   \left(2 e_n p \epsilon_j + R_j'\right)\eqqcolon \beta_j\,.
\end{align*}
Direct verification shows that if $\max_{0\le i\le j-1}\epsilon_i =o_p(1)$, $\max_{0\le i\le j-1}\beta_j=o_p(p)$, and $R_j=o_p(p)$,
 then
$\epsilon_j=o_p(1)$ and $\beta_j=o_p(p)$.

The rest of the proof is to show \eqref{eq:alpha=0-optimality-1}
for some $R_j=o_p(p)$.
The main challenge is that $u_j$ is not in the feasible set of problem \eqref{eq:H_i} and hence $u_j u_j^T$ is not directly comparable to $H_j$
using optimality condition of \eqref{eq:H_i}.
To overcome this difficulty, we consider $\tilde u_j$, a modified version of
$u_j$ so that (a) $\tilde u_j$ is close to $u_j$ in $\ell_2$ norm;
  (b) $\tilde u_j \tilde u_j^T$ is feasible for \eqref{eq:H_i};
  (c) $\tilde u_j$ is almost as smooth as $u_j$.

Define
$\tilde u_j=(I-\EstProjection_{j-1}) u_j/\norm{(I-\EstProjection_{j-1}) u_j}\,.$
We first check the validity of this definition.
\begin{align*}
  \norm{\EstProjection_{j-1}u_j}_2
  \le & \norm{(\EstProjection_{j-1}-\PopProjection_{j-1})u_j}_2+
  \norm{\PopProjection_{j-1}v_j}_2+\norm{\PopProjection_{j-1}(u_j-v_j)}_2
  \le \epsilon_{j-1}+c_0 p^{-1}\,.
\end{align*}
When $\epsilon_{j-1}$ is small and $p$ large, $(I-\EstProjection_{j-1}) u_j\neq 0$, and
\begin{align}
  \norm{\tilde u_j - u_j}_2=&\left\lVert\frac{(I-\EstProjection_{j-1}) u_j}{\norm{(I-\EstProjection_{j-1}) u_j}_2} - \frac{u_j}{\norm{u_j}_2}\right\rVert_2
\le  2\norm{(I-\EstProjection_{j-1}) u_j - u_j}_2
\le
2(\epsilon_{j-1}+c_0 p^{-1})
\,,\label{eq:tilde-u-j-prox}
\end{align}
where the last inequality holds when $p$ is large and the first inequality follows from an elementary fact that, for all $u$, $v$,
$$
\left\lVert \frac{u}{\norm{u}_2} - \frac{v}{\norm{v}_2}  \right\rVert_2\le 2\frac{\norm{u-v}_2}{\max(\norm{u}_2,\norm{v}_2)}\,.
$$

Now we establish \eqref{eq:alpha=0-optimality-1}.
By feasibility of $\tilde u_j\tilde u_j^T$ we have
\begin{align}
  0\le & \innerp{\InputMatrix}{\PrimalVariable_j-\tilde u_j \tilde u_j^T} - \SmoothPenalty\innerp{\DiffOperator}{\PrimalVariable_j}
  +\SmoothPenalty\innerp{\DiffOperator}{\tilde u_j\tilde u_j^T} -\SparsePenalty(\norm{\PrimalVariable_j}_{1,1}
  -\norm{\tilde u_j \tilde u_j^T}_{1,1})\nonumber\\
  \le & \innerp{\InputMatrix}{\PrimalVariable_j-u_j u_j^T}
   - \SmoothPenalty\innerp{\DiffOperator}{\PrimalVariable_j}
  +\SmoothPenalty\innerp{\DiffOperator}{\tilde u_j\tilde u_j^T} +\SparsePenalty p +|\innerp{\InputMatrix}{\tilde u_j\tilde u_j^T-u_j u_j^T}|\,.\label{eq:optimality-1a}
\end{align}

We first bound $|\innerp{\InputMatrix}{\tilde u_j\tilde u_j^T-u_j u_j^T}|$:
\begin{align*}
  |\innerp{\InputMatrix}{\tilde u_j\tilde u_j^T-u_j u_j^T}|
  \le &
  \norm{\InputMatrix}_F\norm{\tilde u_j\tilde u_j^T-u_j u_j^T}_F
  \le (\norm{\PopMatrix}_F+\norm{\InputMatrix-\PopMatrix}_F)\norm{\tilde u_j\tilde u_j^T-u_j u_j^T}_F\\
  \le & 2\sqrt{2}\left(c_2 + e_n\right)p(\epsilon_j+c_0p^{-1})\,,
\end{align*}
where the last step uses \eqref{eq:tilde-u-j-prox}, \Cref{lem:popmatrix},
and the fact that $\norm{\InputMatrix-\PopMatrix}_{\infty,\infty}=e_n$.

Now we control $\SmoothPenalty\innerp{\DiffOperator}{\tilde u_j \tilde u_j^T}$.
When $\epsilon_{j-1}$ and $p^{-1}$ are small enough such that $\norm{(I-\EstProjection_{j-1})u_j}_2\ge 1/\sqrt{2}$,
we have
\begin{align*}
&\SmoothPenalty \innerp{\DiffOperator}{\tilde u_j \tilde u_j^T}=\SmoothPenalty  \norm{\DiffOperatorHalf \tilde u_j}_2^2 \le  2\SmoothPenalty\norm{\DiffOperatorHalf(I-\EstProjection_{j-1})u_j}_2^2\\
\le &  4\SmoothPenalty\left[\norm{\DiffOperatorHalf u_j}_2^2 + \left(\sum_{i=1}^{j-1}\norm{\DiffOperatorHalf \hat v_i}_2|\innerp{\hat v_i}{u_j}|\right)^2\right]
\le  4\left[2\SmoothPenalty L^2 p^{-4}+
\sum_{i=1}^{j-1}\beta_i
(\epsilon_{j-1}+c_0p^{-1})^2\right]\,,
\end{align*}
where the first two inequalities follow from multiple applications of
Cauchy-Schwartz, and the last inequality holds by definition of $\beta_i$, the smoothness
of $u_j$,
and the fact that $\sum_{i=1}^{j-1}|\innerp{\hat v_i}{u_j}|^2=\norm{\EstProjection_{j-1}u_j}_2^2$.
As a consequence, we establish \eqref{eq:alpha=0-optimality-1} from \eqref{eq:optimality-1a}
with
\begin{align*}
R_j=&2\sqrt{2}\left(c_2 +e_n \right)(\epsilon_{j-1}p+c_0)+\SparsePenalty p +
8\left[\SmoothPenalty L^2 p^{-4}+
\sum_{i=1}^{j-1}\beta_i
(\epsilon_{j-1}+c_0p^{-1})^2\right]=o_p(p)\,.\qedhere
\end{align*}
\end{proof}

\begin{proof}[\bf Proof of \Cref{thm:rate}]
  From assumption A1 and \Cref{lem:u-v} it suffices to prove that if $\norm{\InputMatrix-\PopMatrix}_{\infty,\infty}=O(e_n)$ then
$\sup_{1\le j\le k}\norm{\hat v_j-v_j}_2
     =O(e_n+\SparsePenalty)$.  

Consider the estimation procedure given by \eqref{opt2} for $j=1$.
Let $\mathbb B_{1}$ be the collection of all $p\times p$ symmetric matrices
with entries in $[-1,1]$. 
The optimization problem can be written in the following equivalent form.
\begin{align*}
  \max_{\PrimalVariable\in \Fantope{1}}
  \min_{\DualVariable\in\mathbb B_1} 
  \innerp{\InputMatrix}{\PrimalVariable} -\SparsePenalty \innerp{\DualVariable}{\PrimalVariable}\,.
\end{align*}
Let $\PrimalVariable^*$ be any maximizer, then 
$$
\PrimalVariable^*=\arg\max_{\PrimalVariable\in\Fantope{1}} \innerp{\InputMatrix - \SparsePenalty\DualVariable^*}{\PrimalVariable}=\arg\max_{\PrimalVariable\in\Fantope{1}} \innerp{\PopMatrix+\InputError - \SparsePenalty\DualVariable^*}{\PrimalVariable}
$$
where $\InputError=\InputMatrix-\PopMatrix$ and  $\DualVariable^*$ is the corresponding optimal dual variable.

By \Cref{lem:u-v} the eigengap of $\PopMatrix$ is of order
at least $p$ while the operator norm of $\InputError-\SparsePenalty\DualVariable^*$ is at most
$p(\norm{\InputError}_{\infty,\infty}+\SparsePenalty)$ which is $o(p)$. Thus
applying standard spectral subspace perturbation theory we know that
$H^*=\hat v_1\hat v_1^T$ where $\hat v_1$ is the leading eigenvector of
$\PopMatrix+\InputError - \SparsePenalty\DualVariable^*$, and satisfies for some constant $c$
$$
\norm{\hat v_1-v_1}_2\le c\norm{\InputError - \SparsePenalty\DualVariable^*}_F / p\le c(e_n+\SparsePenalty)\,.
$$

For $j=2,...,k$, we use induction.  Suppose that for $j-1$
we have $\norm{\hat v_{j-1} - v_{j-1}}_2$ and $\norm{\EstProjection_{j-1}-\PopProjection_{j-1}}_F$
are bounded by $O(e_n+\SparsePenalty)$.

Now consider the procedure \eqref{opt2} for $j$. Similarly let $\PrimalVariable^*$ be any solution and $\DualVariable^*$ the corresponding optimal dual variable.  We have
\begin{align*}
  \PrimalVariable^*=&\arg\max_{\PrimalVariable\in\D{\hat\Pi_{j-1}}}
   \innerp{\InputMatrix-\SparsePenalty
   \DualVariable^*}{\PrimalVariable}\\
   =&
   \arg\max_{\PrimalVariable\in\Fantope{1}}\innerp{
   (I-\hat\Pi_{j-1})(\InputMatrix-\SparsePenalty
   \DualVariable^*)(I-\hat\Pi_{j-1})}{\PrimalVariable}\,.
\end{align*}
The remainder of the proof focuses on analyzing the matrix $(I-\hat\Pi_{j-1})(\InputMatrix-\SparsePenalty
   \DualVariable^*)(I-\hat\Pi_{j-1})$.  In particular, we show that its leading eigenvector
   is close to $v_j$ with the desired rate.
We first write this matrix in four terms
\begin{align*}
  &(I-\hat\Pi_{j-1})(\PopMatrix+\InputError-\SparsePenalty
     \DualVariable^*)(I-\hat\Pi_{j-1})\\
     =&(I-\Pi_{j-1})\PopMatrix(I-\Pi_{j-1})\\
     &+(\Pi_{j-1}-\hat\Pi_{j-1})\PopMatrix(I-\hat\Pi_{j-1})+(I-\Pi_{j-1})\PopMatrix(\Pi_{j-1}-
     \hat\Pi_{j-1})\\
     & +(I-\hat \Pi_{j-1})(\InputError+\SparsePenalty \DualVariable^*)(I-\hat \Pi_{j-1})\\
     =& T_0+T_1+T_2\,.
\end{align*}
The main term is $T_0$.  The leading eigenvector of $T_0$ is $v_j$ with an eigengap  at least $\delta p / 2$ according to \Cref{lem:popmatrix}. Next we bound $T_1$ and $T_2$.
 In fact we have
\begin{equation}\label{eq:induction-error}
\norm{T_1}_F
     \le 2\norm{\Pi_{j-1}-\hat\Pi_{j-1}}_F\norm{\PopMatrix}_2 \le 2c_2 \norm{\Pi_{j-1}-\hat\Pi_{j-1}}_F p\,,  
\end{equation}
where $c_2$ is the constant in \Cref{lem:popmatrix}
and
$$
\norm{T_2}_F\le \norm{\InputError+\SparsePenalty \DualVariable^*}_F\le (e_n+\SparsePenalty)p\,.
$$
Then we have
\begin{equation}\label{eq:induction-error-2}
\norm{T_1+T_2}_F\le 2c_2\norm{\Pi_{j-1}-\hat\Pi_{j-1}}_F p +(e_n+\SparsePenalty)p\,.
\end{equation}
When $n$ and $p$ are large enough, $\norm{T_1+T_2}_F$ is smaller than the gap
between the first and second largest eigenvalues of $T_0$. Therefore, the
induction completes by using Davis-Kahan $\sin
\Theta$ theorem \citep[Theorem VII.3.1]{Bhatia97}
\begin{equation}\label{eq:induction-error-3}
  \norm{\hat v_j\hat v_j^T - v_jv_j^T}_F\le 2 \norm{T_1+T_2}_F/(\delta p / 2)\le 8c_2\delta^{-1} \norm{\Pi_{j-1}-\hat\Pi_{j-1}}_F+ 4\delta^{-1}(e_n+\SparsePenalty)\,,
\end{equation}
where $c_2$ is the constant given in \Cref{lem:popmatrix}.
\end{proof}

\noindent{\bf Proof of Technical Lemmas}
 \begin{proof}[\bf Proof of \Cref{lem:u-v}]
Note that $\tilde\PopOperator$ is a compact self-adjoint operator
from $L^2(0,1)$ to $L^2(0,1)$ with eigen-decomposition $ \tilde\PopOperator(s,t) = \sum_{j=1}^p \tilde\lambda_{j} \tilde\phi_{j}(s)\tilde\phi_{j}(t)\,$.

The Lipschitz condition on $\PopOperator$ implies that
\begin{align}\label{eq:HS-dist}
  \norm{\tilde\PopOperator-\PopOperator}_{\rm HS}^2\coloneqq \int\int |\PopOperator_{p}(s,t)-\PopOperator(s,t)|^2 dsdt\le \frac{L^2}{4p^2}\,.
\end{align}
By Weyl's inequality, $|\tilde\lambda_j-\lambda_j| \le\delta/2$
for large $p$.
Let $E_j$ and $\tilde E_{j}$ be the projection operators
onto the one-dimensional subspaces spanned by $\phi_j$ and $\tilde\phi_{j}$, respectively.
Then
\begin{align}
  \norm{\tilde\phi_{j}-\phi_j}_2\le \sqrt{2}\norm{\tilde E_{j}-E_{j}}_{\rm HS}\le \frac{4 \norm{\tilde\PopOperator-\PopOperator}_{\rm HS}}{\Gap}\le \frac{2L}{\Gap p}\,,
  \label{eq:phi-pj-phi-j}
\end{align}
where the first inequality follows from \Cref{lem:simple}, and the second inequality follows from the Davis-Kahan $\sin\Theta$ theorem (Chapter VII of \cite{Bhatia97}).

On the other hand, by assumption (A4) we have
\begin{align}
  \norm{\phi^*_{j}-\phi_j}_2\le\norm{\phi^*_{j}-\phi_j}_\infty\le \frac{L}{2p}\,.
  \label{eq:phi-pjs-phi-j}
\end{align}
which, together with \eqref{eq:phi-pj-phi-j}, implies that
\begin{align*}
  \norm{v_{j}-u_{j}^*}_2\le\left(\frac{2}{\Gap}+\frac{1}{2}\right)\frac{L}{p}\,.
\end{align*}
Also note that
\begin{align*}
  \norm{u_j^*-u_j}_2
    = &
    \left|\norm{u_j^*}_2-1\right|
    \le
    \norm{\phi_j^*-\phi_j}_2
    \le
    \norm{\phi_j^*-\phi_j}_\infty
    \le
      \frac{L}{2p}\,.
\end{align*}
Combining the previous two inequalities, we have
\begin{align*}
  \norm{v_j-u_j}_2\le & \left(\frac{2}{\Gap}+1\right)\frac{L}{p}\coloneqq c_0p^{-1}\,.\qedhere
\end{align*}
\end{proof}
\begin{proof}[\bf Proof of \Cref{lem:popmatrix}]
  The first claim follows from, letting $\PopMatrix^*$ be the
  discretized $\PopOperator$ evaluated at the grid,
  \begin{align*}
    \norm{\PopMatrix}_{F}^2 \le  &
    2\norm{a I_p}_F^2 + 2\norm{\PopMatrix^*}_F^2= 2 a^2 p + 2p^2 \norm{\tilde\PopOperator}_{\rm HS}^2\\ \le&
     2 a^2 p + 2p^2\left(2\norm{\PopOperator}_{\rm HS}^2 + 2\norm{\tilde\PopOperator-\PopOperator}_{\rm HS}^2\right)\le 2 a^2 p +2p^2\left(2\norm{\PopOperator}_{\rm HS}^2+\frac{L^2}{2p^2}\right)\\
    \le & p^2\left(2 a^2 p ^{-1}+4\norm{\PopOperator}_{\rm HS}^2+L^2\right)\le c_2^2 p^2\,,
  \end{align*}
  where \eqref{eq:HS-dist} is used to bound $\norm{\tilde\PopOperator-\PopOperator}_{\rm HS}$.

  The second claim follows from the fact that
  the eigengaps of $\PopMatrix$ are the same as those of
  $\PopMatrix^*$, and by Weyl's inequality:
  \begin{align*}
  \tilde\lambda_j - \tilde\lambda_{j+1}\ge & \lambda_j-\lambda_{j+1}-2\norm{\tilde \PopOperator - \PopOperator}_{\rm HS}\ge \delta - \frac{\delta}{2}=\delta/2\,.  \qedhere
  \end{align*}
\end{proof}

 \begin{proof}[\bf Proof of \Cref{lem:approx-curvature}]
Note that $v_j$ is the leading eigenvector of $\PopMatrix_j$, with eigengap at least $p\Gap/2$ as implied
by \Cref{lem:popmatrix}.
Then we have
\begin{align*}
  \norm{\PrimalVariable_j-u_j u_j^T}_F^2\le&  2\norm{\PrimalVariable_j-v_jv_j^T}_F^2 + 2\norm{v_jv_j^T - u_j u_j^T}_F^2
\le  \frac{8}{p\Gap}\innerp{-\PopMatrix}{\PrimalVariable_j-v_j v_j}+ 4c_0^2 p^{-2}\\
\le& \frac{8}{p\Gap}\innerp{-\PopMatrix}{\PrimalVariable_j- u_j u_j^T} + \frac{8}{p\Gap}
\norm{\PopMatrix}_F\norm{u_j u_j ^T - v_j v_j^T}_F + 4c_0^2 p^{-2}\\
\le & \frac{8}{p\Gap}\innerp{-\PopMatrix}{\PrimalVariable_j- u_j u_j^T} + \frac{8\sqrt{2}c_0c_2}{p\Gap}+4c_0^2 p^{-2}\\
\le &\frac{8}{p\Gap}\innerp{-\PopMatrix}{\PrimalVariable_j- u_j u_j^T} + \frac{12c_0c_2}{p\Gap}\,,
\end{align*}
where the first and third inequalities come from Cauchy-Schwartz, the second from the curvature
lemma of principal subspace (\cite{vu2013minimax}, Lemma 4.2), and the last holds provided that $p$ is sufficiently large.
\end{proof}

\bibliographystyle{apa-good}
\bibliography{sfpca}

%


\end{document}